\newtheorem{propo}{Proposition}
\DeclareMathAlphabet{\mathcal}{OMS}{cmsy}{m}{n} 
\begin{document}

\leadauthor{Behjat}

\title{{\Large Signal-Adapted Decomposition of Graph Signals}}

\shorttitle{Signal-Adapted Decomposition of Graph Signals}

\author[1,2,\Letter]{Harry H. Behjat}
\author[3]{Carl-Fredrik Westin}
\author[1,4]{Rik Ossenkoppele}
\author[5,6]{Dimitri Van~De~Ville}
\affil[1]{Department of Clinical Sciences Malm{\"o}, Lund University, Lund, Sweden}
\affil[2]{Department of Biomedical Engineering, Lund University, Lund, Sweden}
\affil[3]{Department of Radiology, Brigham and Women's Hospital, Harvard Medical School,Boston, Massachusetts, USA}
\affil[4]{VU University Medical Center, Amsterdam UMC, Amsterdam, Netherlands}
\affil[5]{Neuro-X Institute, Ecole Polytechnique Federale de Lausanne, Lausanne, Switzerland}
\affil[6]{Department of Radiology and Medical Informatics, University of Geneva, Geneva, Switzerland}
\affil[$\ast$]{\Letter~Corresponding author; e-mail: harry.behjat@gmail.com}
\date{}

\maketitle

\begin{abstract}\small Analysis of signals defined on complex topologies modeled by graphs is a topic of increasing interest. Signal decomposition plays a crucial role in the representation and processing of such information, in particular, to process graph signals based on notions of scale (e.g., coarse to fine). The graph spectrum is more irregular than for conventional domains; i.e., it is influenced by graph topology, and, therefore, assumptions about spectral representations of graph signals are not easy to make. Here, we propose a tight frame design that is adapted to the graph Laplacian spectral content of given classes of graph signals. The design is based on using the ensemble energy spectral density, a notion of spectral content of given signal sets that we determine either directly using the graph Fourier transform or indirectly through a polynomial-based approximation scheme. The approximation scheme has the benefit that (i) it does not require eigendecomposition of the Laplacian matrix making the method feasible for large graphs, and (ii) it leads to a smooth estimate of the spectral content. A prototype system of spectral kernels each capturing an equal amount of energy is initially defined and subsequently warped using the signal set's ensemble energy spectral density such that the resulting subbands each capture an equal amount of ensemble energy. This approach accounts at the same time for graph topology and signal features, and it provides a meaningful interpretation of subbands in terms of coarse-to-fine representations. We also show how more simplified designs of signal-adapted decomposition of graph signals can be adopted based on ensemble energy estimates. We show the application of proposed methods on the Minnesota road graph and three different designs of brain graphs derived from neuroimaging data.
\end{abstract}

\section{Introduction}
Many fields of science rely on network analysis to study complex systems. Networks are modeled mathematically as weighted graphs that have a set of nodes (vertices) with interactions between them represented by connections (links) and associated strengths. A rich repertoire of methods have been developed to pursue original queries and integrate the complexity of network structure into the analysis, subsequently providing new interpretations of datasets in divers scientific disciplines ranging from social sciences to physics and biology. One of the successes in network analysis is the ability to identify sets of nodes based on their connectivity. Traditional graph partitioning goes back to optimizing graph cuts~\citep{Fiedler1973}, while more recent community detection identifies sets of nodes that are more densely connected inside the set than outside~\citep{Girvan2002}. Community detection has been widely applied and many variants of the corresponding optimization criterion have been proposed~\citep{Newman2010}.

Another major trend is the emergence of methods for processing data that reside on top of networks~\citep{Ortega2018, Shuman2013, Sandryhaila2013, Stankovic2019}. Measurements on a network's nodes can be viewed as graph signals, enabling the extension of classical signal processing operations such as denoising, filtering, and transformation by leveraging the network's underlying connectivity. Numerous approaches have been proposed to generalize multiresolution transforms, filter bank designs, and dictionary constructions to the graph domain. These studies fall primarily into two categories: spatial (vertex) and spectral (frequency) designs. Schemes that fall within the former family include methods for designing wavelets for hierarchical trees \citep{Ram2, Gavish} and methods based on lifting schemes \citep{Jansen2009, Narang2009}. The latter family is based on spectral graph theory~\citep{Chung1997}, which is a powerful approach based on the eigendecomposition of matrices associated with graphs such as the adjacency matrix or graph Laplacian. Its strength lies in the global nature of eigenvectors, which capture key graph properties and enable solutions to convex relaxations of graph cut minimization~\citep{Luxburg2007}, or to define signal-processing operations by a graph equivalent of the Fourier transform \citep{Shuman2013}. In its application to graph signal processing, operations are performed in the spectral domain using graph spectral filters. One of the first proposals, the spectral graph wavelet transform (SGWT) frame \citep{Hammond2011}, is built using scaled cubic spline spectral kernels and a low-pass spectral polynomial kernel. Moreover, various constructions of systems of spectral graph kernels leading to tight frames were proposed in \citep{Leonardi2013, Shuman2020, Isufi2024}. Tight frames are particularly interesting because of their property of energy conservation between the original and transformed domain \citep{Benedetto}. Other approaches to spectral domain design include diffusion wavelets \citep{Coifman2006}, vertex-frequency frames \citep{Shuman2015acha, Stankovic2017, Ghandehari2021, Stankovic2020} and approaches to graph filter-bank design using bipartite graph decompositions \citep{Narang, Tanaka, Tay2023}, connected sub-graph decomposition \citep{Tremblay}, graph coloring \citep{ShumanPyramid} and Slepian functions that provide a trade-off between temporal and spectral energy concentration \citep{Vandeville2017}.

A key challenge of the graph spectrum is its dependence on the graph itself, making the spectral representation of a graph signal reliant on both the domain and the signal. However, most spectral designs define spectral windows independently of the graph and signal. The first major methodology that employed adaptation to the graph's spectral properties is the spectrum-adapted tight graph wavelet and vertex-frequency frames proposed in~\citep{Shuman2015ieee}. These kernels address the non-uniform distribution of Laplacian eigenvalues, ensuring each spectral kernel supports a similar number of eigenvalues. In \citep{Thanou2014,yankelevsky2016}, numerical dictionary learning approaches were proposed, where dictionaries are learned from training signals. These learned kernels are indirectly adapted to both the graph Laplacian spectrum and the training data by incorporating the graph structure into the learning process. An application-specific approach in \citep{Behjat2015} tailored the Meyer-like frame design \citep{Leonardi2013} to the spectral content of functional MRI signals, producing narrow-support kernels covering the lower end of the spectrum. Here we propose an approach for constructing tight graph frames that incorporate both the intrinsic graph topology, as in \citep{Shuman2015ieee}, and the characteristics of a given signal set. This is achieved by considering a graph-based energy spectral density notion that includes signal and topology properties and encodes the energy-wise significance of the graph eigenvalues. A system of spectral kernels tailored to the energy spectral density is  constructed by starting from the design of a prototype tight frame with uniform spectral coverage, followed by a warping step which incorporates the energy spectral density information to the prototype design, resulting in a tight frame with equi-energy subbands. We also present results of more simplified designs tight frames that are adapted to the signal energy content of signal sets at hand, without the explicit need to incorporate the warping step.      

\section{Preliminaries}
\label{sec:prelim}

\subsection{Graphs and Spectral Graph Theory}
A graph can be denoted as $G=(V,E)$ with $N_{g}$  vertices in set $V$, a set of edges as tuples ($i,j$) in $E$ where $i,j \in V$. The size of the graph is the number of vertices. In this work, we only consider undirected graphs without self-loops. Algebraically, $G$ can be represented with the node-to-node adjacency matrix $A$, with elements $a_{i,j}$ denoting the weight of the edge $(i,j)$ if $(i,j) \in  E$; $a_{i,j} = 0$ if $(i,j) \notin  E$. The degree matrix $D$ of $G$ is diagonal with elements $ d_{i,i}=\sum_{j} a_{i,j}$. The Laplacian matrices of $G$ in combinatorial form $L$ and normalized form $\mathcal{L}$ are defined as 
\begin{align}
L & = D-A,  \\ 
\mathcal{L} & =  D^{-1/2} L D^{-1/2},
\end{align} 
respectively. Both $L$ and $\mathcal{L}$ are symmetric and positive semi-definite, and thus, their diagonalizations lead to a set of $N_{g}$ real, non-negative eigenvalues that define the graph Laplacian spectrum
\begin{equation}
\label{eq:graphSpectrum}
\Lambda(G) = \{ 0 = \lambda_{1} \le \lambda_{2} \cdots \le \lambda_{N_{g}} = \lambda_{\textrm{max}} \}.
\end{equation}
The corresponding set of eigenvectors $\{\vec{\chi}_{l}\}_{l=1}^{N_{g}}$ forms a complete set of orthonormal vectors that span the graph spectral domain \citep{Chung1997}. When necessary, we use the notations $\Lambda_{L}(G)$ and $\Lambda_{\mathcal{L}}(G)$ to distinguish between the two definitions of the graph Laplacian. As the eigenvalues may be repetitive, for each $\lambda_{l}$, we denote its algebraic multiplicity by $m_{\lambda_{l}}$ and the index of its first occurrence by $i_{\lambda_{l}}$. That is, if $\lambda_{l}$ is singular, i.e. $m_{\lambda_{l}} =1$, then $i_{\lambda_{l}} = l$, and if $\lambda_{l}$ is repetitive, then $i_{\lambda_{l}}  \le l$. The multiplicity of eigenvalues equal to zero reflects the number of connected components in the graph. In this paper, only connected graphs are considered, and thus, $m_{\lambda_{1}} = 1$.  

\subsection{Graph Signals: Vertex versus Spectral Representations} 
\label{sec:graphFourierTrans}
Let $\ell_{2}(G)$ denote the Hilbert space of all square-summable real-valued vectors $\vec{f}\in\mathbb{R}^{N_{g}}$, with the inner product defined as 
\begin{equation}
\langle \vec{f}_{1}, \vec{f}_{2} \rangle  = \sum_{n=1}^{N_g} f_{1}[n] f_{2}[n], \quad \forall \vec{f}_{1}, \vec{f}_{2} \in \ell_{2}(G) 
\end{equation}
and the norm as  
\begin{equation}
{|| \vec{f} ||}_{2}^{2} =  \langle \vec{f}, \vec{f} \rangle = \sum_{n=1}^{N_g} |f[n]|^{2}, \quad \forall \vec{f} \in \ell_{2}(G).
\end{equation}
\\
A real signal defined on the vertices of a graph, $\vec{f}: V \rightarrow \mathbb{R}$, can be seen as a vector in $\ell_{2}(G)$, where the $n$-th element represents the value of the signal on the $n$-th vertex. 
 
For any $\vec{f} \in \ell_{2}(G)$, its spectral representation $\hat{\vec{f}} \in \ell_{2}(G)$, known as the graph Fourier transform of $\vec{f}$, can be used to express $\vec{f}$ in terms of the graph Laplacian eigenvectors
\begin{equation}
\label{eq:inverseFourier}
f[n] =  \sum_{l=1}^{N_{g}} \underbrace{\langle \vec{f}, \vec\chi_{l} \rangle}_{=\hat{f}[l]} \chi_{l}[n].
\end{equation}

With this definition of the Fourier transform, it can be shown that the Parseval relation holds \citep{Shuman2015acha}  
\begin{equation}
\label{eq:parseval}
\forall \vec{f}_{1}, \vec{f}_{2} \in \ell_{2}(G), \quad \langle \vec{f}_{1}, \vec{f}_{2} \rangle = \langle \hat{\vec{f}}_{1}, \hat{\vec{f}}_{2} \rangle.
\end{equation}

\subsection{Filtering of Graph Signals}
In the graph setting, the generalized convolution product is defined as 
\begin{equation}
\label{eq:generalizedConvolution}
( \vec{f}_{1} \ast \vec{f}_{2})[n] = \sum_{l=1}^{N_g} \hat{f}_{1}[l] \hat{f}_{2}[l] \chi_{l}[n], \quad \forall \vec{f}_{1}, \vec{f}_{2} \in \ell_{2}(G)). 
\end{equation}

In analogy with conventional signal processing, filtering of graph signals can be viewed as an operation in the spectral domain. For a given graph signal $\vec{f} \in \ell_{2}(G)$ and graph filter $\vec{g} \in \ell_{2}(G)$, defined through its Fourier transform $\hat{\vec{g}}$, the filtered signal, denoted by $(F_{\vec{g}} \vec{f})$, can be obtained as   
\begin{align}
\label{eq:filteringVertexDomain}
(F_{\vec{g}} \vec{f})[n] & = (\vec{g} \ast \vec{f})[n]  
\\ 
& 
\label{eq:filteringVertexDomain2}
\stackrel{(\ref{eq:generalizedConvolution})}{=} \sum_{l=1}^{N_g} \hat{g}[l] \hat{f}[l] \chi_{l}[n].
\end{align}
For the graph filter $\vec{g}$, the filter response of an impulse at vertex $m$ 
\begin{equation}
\label{eq:impulse}
\vec{f} = \vec{\delta}_{m} \leftrightarrow \hat{\delta}_{m}[l] = \langle \vec{\delta}_{m}, \vec{\chi}_{l} \rangle = \chi_{l}[m], 
\end{equation}
can be obtained as
\begin{equation}
\label{eq:impulseFiltering}
(F_{\vec{g}} \vec{\delta}_{m})[n] 
= 
\sum_{l=1}^{N_g} \hat{g}[l] \chi_{l}[m] \chi_{l}[n].
\end{equation}
The impulse response of a graph filter is, in general, shift-variant; i.e, the impulse response at one vertex is not simply a shifted version of the impulse response at any other node. This is due to the absence of a well-defined shift operator in the graph setting as that defined in the Euclidean setting. Therefore, a graph filter is conventionally defined by its spectral kernel $\hat{\vec{g}}$ rather than by its impulse response.

Although the graph spectrum is discrete, to design spectral kernels, it is often more elegant to define an underlying smooth continuous kernel. Let $L_{2}(G)$ denote the Hilbert space of all square-integrable spectral functions $K(\lambda): [0,\lambda_{\textrm{max}}] \rightarrow  \mathbb{R}^{+}$, with the inner product defined as 
\begin{equation}
\label{eq:L2innerproduct}
\left  \langle K_{1},K_{2}  \right  \rangle_{\textrm{L}_{2}} = \int_{-\infty}^{+\infty} K_{1}(\lambda) K_{2}(\lambda) d\lambda, \quad \forall K_{1},K_{2} \in \textrm{L}_{2}(G),
\end{equation}
and the $\textrm{L}_{2}$-norm defined as
\begin{equation}
\| K\|_{\textrm{L}_{2}}^{2} =  \left  \langle K,K  \right  \rangle_{\textrm{L}_{2}}, \quad \forall K \in \textrm{L}_{2}(G).
\end{equation}
A discrete version of $K(\lambda) \in L_{2}(G)$ can then be determined as 
\begin{equation}
\label{eq:kernelSamplig}
k[l] = K(\lambda_{l}), \quad l=1,\ldots, N_{g}.
\end{equation} 
Note that although $\vec{k}$ is defined in the spectral domain, it is not linked to any explicit vertex representation, and thus, the Fourier symbol $\:\:\: \hat{} \:\:\:$ is not used for their denotation. This notation convention will be used throughout the paper.

\subsection{Dictionary of Graph Atoms}
For a given spectral kernel $\vec{k}$ associated with $K(\lambda)$, the vertex-domain impulse responses are obtained as
\begin{equation}
\label{eq:atoms}
{\vec\psi}_{K,m} = (F_{\vec{k}} \vec{\delta}_{m}) 
\leftrightarrow \hat{{\vec\psi}}_{K,m}[l] = k[l] \chi_{l}[m].
\end{equation}
The collection of impulse responses $\{ {\vec\psi}_{K,m}\}_{m=1}^{N_g}$ are considered as graph \textit{atoms} associated with  spectral kernel ${K}(\lambda)$. 
Given a set of $J$ spectral kernels $\{\vec{k}_{j} \in \ell_{2}(G)\}_{j=1}^{J}$, a dictionary of graph atoms $D_{G}$ with $J N_{g}$ elements can be obtained 
\begin{equation}
\label{eq:dictionary}
D_{G} = \Big \{ \{{\vec\psi}_{K_{j},m}\}_{j=1}^{J}\Big \}_{m=1}^{N_{g}}.
\end{equation}
The atoms of $D_{G}$ form a frame in $\ell_{2}(G)$ if there exist bounds $B_2 \ge B_1 > 0$ such that \citep{Benedetto} 
\begin{equation}
\label{eq:frame}
\forall  \vec{f} \in \ell_{2}(G), \quad B_1 ||\vec{f}||_{2}^{2} \le \sum_{j,m}  | \langle \vec{f},{\vec\psi}_{K_{j},m} \rangle |^{2} \le B_2 ||\vec{f}||_{2}^{2},
\end{equation}
where the frame bounds are given by 
\begin{align}
\label{eq:G1}
B_1  = \min_{\lambda \in [0,\lambda_{\textrm{max}}]} G(\lambda), \quad B_2  = \max_{\lambda \in [0,\lambda_{\textrm{max}}]} G(\lambda),
\end{align}
and $G(\lambda)\in L_{2}(G)$ is defined as  
\begin{equation}
\label{eq:G2}
G(\lambda) = \sum_{j=1}^{J} | K_{j}(\lambda)|^{2}.
\end{equation}
In particular, $D_{G}$ forms a tight frame if 
\begin{equation}
\label{eq:tightFrame}
\forall \lambda \in [0,\lambda_{\textrm{max}}], \quad G(\lambda) = C,
\end{equation}
and a Parseval frame if $C=1$. 

\subsection{Decomposition of Graph Signals}
\subsubsection*{Direct Decomposition}
To decompose a graph signal $\vec{f}$ onto a set of the atoms in $D_{G}$, the coefficients can be obtained as 
\begin{align}
\label{eq:coeffs1}
	c_{K_{j},m} & = \langle \vec{f},{\vec\psi}_{K_{j},m} \rangle 
	\\
	\label{eq:coeffs2}
	 &\stackrel{(\ref{eq:parseval})}{=} \sum_{l=1}^{N_g} \hat{\psi}_{K_{j},m}[l] \hat{f}[l], 
\\
\label{eq:coeffs3}
	 & \stackrel{(\ref{eq:atoms})}{=} \sum_{l=1}^{N_g} {k}_{j}[l] \hat{f}[l] \chi_{l}[m].
\end{align}
Relation (\ref{eq:coeffs3}) shows that the direct decomposition requires a full eigendecomposition of the $L$ since it requires i) the Laplacian eigenvectors $\{\vec\chi_{l}\}_{l=1}^{N_{g}}$ and ii) the graph Fourier transform of the signal $\hat{\vec{f}}$. 

If $D_{G}$ forms a Parseval frame, the coeficents can be used to recover the original signal as
\begin{align}
f[n] & = \sum_{j} \sum_{m} c_{K_{j},m} {\vec\psi}_{K_{j},m}
\nonumber
\\
&  
= 
\sum_{j} \sum_{m} \sum_{l} {k}_{j}[l] \hat{f}[l] \chi_{l}[m] \sum_{l^{'}} k_{j}[l^{'}] \vec\chi_{l^{'}}[m] \chi_{l^{'}}[n]
\nonumber
\\
& = \sum_{l} \sum_{l^{'}} \sum_{j} {k}_{j}[l] {k}_{j}[l^{'}] \hat{f}[l] \chi_{l^{'}}[n]  \underbrace{\sum_{m} \chi_{l}[m] \chi_{l^{'}}[m]}_{\delta_{l-l^{'}}}
\nonumber
\\
& = \sum_{l} \underbrace{\sum_{j} {k}_{j}^{2}[l]}_{= 1} \hat{f}[l] \chi_{l}[n].
\end{align}

\subsubsection*{Decomposition Through Polynomial Approximation}
\label{sec:decomposition_approx}
The decomposition of $\vec{f}$ on $D_{G}$ leads to a coefficient vector associated to each $\vec{k}_{j}$ given as
\begin{align}
\label{eq:coeffVector1}
\vec{c}_{K_{j}} & \: \:\: = [c_{K_{j},1}, c_{K_{j},2}, \ldots, c_{K_{j},N_{g}}]^{T} 
\\
\label{eq:coeffVector2}
 & \stackrel{(\ref{eq:coeffs3})}{=} \sum_{l=1}^{N_g} {k}_{j}[l] \hat{f}[l] \vec\chi_{l},  
\end{align}
that can be interpreted as filtered versions of $\vec{f}$ with different spectral kernels $\{ \vec{k}_{j}\}_{j=1}^{J}$. Due to the redundancy of such a transform, it is beneficial to implement the transform using a fast algorithm, rather than using the explicit computation of the coefficients through (\ref{eq:coeffs3}). Moreover, for large graphs, it can be cumbersome to compute the full eigendecomposition of $L$, and in extensively large graphs---unless for graphs of special form e.g. those that have structures similar to graphons~\citep{Ghandehari2022}---this can in fact be impossible. One solution to overcome this computational burden is to use a polynomial approximation scheme.   

One such algorithm is the truncated Chebyshev polynomial approximation method \citep{Hammond2011}, which is based on considering the expansion of the continuous spectral window functions $\{K_{j}(\lambda)\}_{j=1}^{J}$ with the Chebyshev polynomials $C_{p}(x) = \cos(p \arccos (x))$ as  
\begin{equation}
\label{eq:chebyExpansion}
	K_{j}(\lambda)= \frac{1}{2}{d}_{K_{j},0} + \sum_{p=1}^{\infty} {d}_{K_{j},p} \:\bar{C}_{p} \left (\lambda \right ), 
\end{equation}
where $\bar{C}_{p}(x) = C_{p}(\frac{x - b} {b})$, $b= \lambda_{\textrm{max}}/2$ and ${d}_{K_{j},p}$ denote the Chebyshev coefficients obtained as
\begin{equation}
\label{eq:coeffsApprox}
	d_{K_{j},p} = \frac{2}{\pi} \int_{0}^{\pi} \cos(p \theta) K_{j}(b(\cos (\theta) +1)) d\theta. 
\end{equation}
By truncating (\ref{eq:chebyExpansion}) to $M$ terms, $K_{j}(\lambda)$ can be approximated as an $M$-th order polynomial $P_{j}(\lambda) \in L_{2}(G)$. Consequently, $\vec{c}_{K_{j}}$ can be approximated as 
\begin{align}
\label{eq:coeffsApprox}
	{\vec{c}}_{K_j} 
&\stackrel{(\ref{eq:coeffVector2})}{=} \sum_{l=1}^{N_g} \underbrace{k_{j}[l]}_{K_{j}(\lambda_{l})} \hat{f}[l] \vec\chi_{l} 	
\\ &  \approx \sum_{l=1}^{N_g} P_{j}(\lambda_{l}) \hat{f}[l] \vec\chi_{l} 	
\\ \label{eq:polyLaplacian} &  = P_{j}(L) \sum_{l=1}^{N_g} \hat{f}[l] \vec\chi_{l} 	
\\ & \stackrel{(\ref{eq:inverseFourier})}{=} P_{j}(L)   \vec{f}
\end{align}
where in (\ref{eq:polyLaplacian}) we exploit the property $L \chi_{l} = \lambda_{l} \chi_{l} \Rightarrow P_{j}(L) \chi_{l} = P_{j}(\lambda_{l}) \chi_{l} $.  

\section{Spectral Warping}
Given a spectral kernel $K(\lambda) \in L_{2}(G)$ and a function $T_{F}(\lambda) \in L_{2}(G): [0,\lambda_{\textrm{max}}] \rightarrow [0,\lambda_{\textrm{max}}]$, $K(\lambda)$ can be warped using $T_{F}(\lambda)$ as $K'(\lambda) = K(T_{F}(\lambda))$. The resulting kernel $K'(\lambda)$ remains a kernel $\in L_{2}(G)$ and also satisfies $\| K(T_{F}(\lambda))\|_{\textrm{L}_{2}} = \| K\|_{\textrm{L}_{2}}$.

More generally, given a system of spectral kernels $\{K_j(\lambda)\}_{j=1}^{J}$ that form a tight frame, spectral warping using $T_{F}(\lambda)$ results in a warped system of kernels which retain the tight frame property, i.e. (\ref{eq:G1})-(\ref{eq:tightFrame}) for the same frame bounds as in the original system of kernels, since:  
\begin{align}
\sum_{j=1}^{J} \vert {K'}_{j}(\lambda)\vert ^{2} 
& = \sum_{j=1}^{J} \vert K_{j}(\underbrace{T_{F}(\lambda)}_{:=\lambda^{'}})\vert ^{2}, \quad \forall \lambda \in  [0,\lambda_{\text{max}}] 
\nonumber
\\
& \: \: = \sum_{j=1}^{J} \vert K_{j}(\lambda^{'})\vert ^{2}, \quad \forall \lambda^{'} \in  [0,\lambda_{\text{max}}].
\nonumber
\end{align}

\if 0
Moreover, the resulting system of spectral kernels form a partition of unity, i.e.,
\begin{equation}
\label{eq:tightFrameConstraint}
\sum_{j=1}^{J} |{\boldsymbol{s}}_{j}[l]|^{2} = 1, \quad l=1,\ldots,N_{g},
\end{equation}
and thus, their associated dictionary of atoms, i.e., $\Big \{ \{{\boldsymbol\psi}_{S_{j},m}\}_{j=1}^{J}\Big \}_{m=1}^{N_{g}}$, forms a Parseval frame. 
\fi

\section{Prototype Systems of Spectral Kernels}
There are a number of prototype systems of spectral kernels that can be leveraged in different stages of performing signal-adapted decomposition of graph signals. In particular, there are two stages that systems of spectral kernels are used: (i) to estimate the energy spectral density of a graph signal; (ii) to decompose a graph signal. For (i), generally a system with a large number of kernels is desired to obtain a suitable and sufficiently detailed estimate of the distribution of signal energy across the graph spectrum. For (ii), generally a small number of kernels is desired to characterise a the signal at multiple interpretable spectral subbands. In the following, the construction detail and properties of two different categories of kernels are given, the first of which i used for (i) and the second one is used for (ii).

\subsection{B-spline based System of Spectral Kernels} 
\label{sec:bsplineBasedSystem}
The central B-spline of degree $n$, denoted $\beta^{(n)}(x)$, is a compactly-supported function in the interval $[-\Delta^{(n)},\Delta^{(n)}]$, i.e., 
$\beta^{(n)}(x)=0$ for all $|x| \ge \:  \Delta^{(n)}$ where $\Delta^{(n)}=(n+1)/2$, and is obtained through the $(n+1)$-fold convolution as
\begin{equation}
\beta^{(n)}(x) = \underbrace{\beta^{(0)}(x) \ast\beta^{(0)}(x) \ast \cdots \beta^{(0)}(x)}_{(n+1) \textrm{times}},
\end{equation}
where  
\begin{equation}
\beta^{(0)}(x) = 
\begin{cases}
1, \quad -\frac{1}{2}<x<\frac{1}{2}
\\
\frac{1}{2}, \quad |x| = \frac{1}{2}
\\
0, \quad \textrm{otherwise}.
\end{cases}
\end{equation}

\begin{propo} 
\label{propo:BsplineFrame}
(B-spline based Parseval Frame on Graphs) 
For a given graph $G$ and B-spline generating function $\beta^{(n)}(x)$, $n\ge 2$, a set of B-spline based spectral kernels $\{B_{j}(\lambda) \in L_{2}(G)\}_{j=1}^{J}$ can be defined as
\begin{equation}
\label{eq:B}
B_{j}(\lambda) =
\begin{cases}
\widetilde{B}_{j}(\lambda) + \sum_{i=-\Delta}^{0}\widetilde{B}_{i}(\lambda), j = 1 
\\
\widetilde{B}_{j}(\lambda), j = 2,\ldots,J-1 
\\
\widetilde{B}_{j}(\lambda) + \sum_{i=J+1}^{J+\Delta+1}\widetilde{B}_{i}(\lambda), j = J 
\end{cases}
\end{equation}
where $\Delta=\lfloor n/2 \rfloor -1$ and $\widetilde{B}_{\cdot}(\lambda) \in L_{2}(G)$ is defined as
\begin{equation}
\label{eq:B_l}
\widetilde{B}_{l}(\lambda) = \sqrt{\beta^{(n)}\left (\frac{\lambda_{\textrm{max}}}{J-1} (\lambda - l+1) \right )}, \quad l=-\Delta, \ldots, J+\Delta+1.
\end{equation}
 
The system of kernels $\{B_{j}(\lambda)\}_{j=1}^{J}$ satisfy 
\begin{equation}
\sum_{j=1}^{J} |B_{j}(\lambda)|^{2} = 1, \quad \forall \lambda \in [0,\lambda_{\textrm{max}}],
\label{eq:pouBsplines}
\end{equation}
and, thus, their associated dictionary of atoms forms a Parseval frame.
\end{propo}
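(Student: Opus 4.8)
The plan is to verify the partition-of-unity identity $\sum_{j=1}^{J}|B_j(\lambda)|^2 = 1$ for all $\lambda\in[0,\lambda_{\max}]$ directly, by reducing it to a classical shift-invariant property of B-splines. The crucial observation is that squaring $\widetilde{B}_l(\lambda)$ removes the square root in~(\ref{eq:B_l}), so that $|\widetilde{B}_l(\lambda)|^2 = \beta^{(n)}\!\left(\tfrac{\lambda_{\max}}{J-1}(\lambda-l+1)\right)$. Introducing the change of variable $u = \tfrac{\lambda_{\max}}{J-1}(\lambda+1)$, which maps $[0,\lambda_{\max}]$ into a bounded interval, the shifted arguments become $u - \tfrac{\lambda_{\max}}{J-1}\, l$; however, a cleaner route is to note the spacing: the sampling points $\{l\}$ are unit-spaced in the variable before the $\tfrac{\lambda_{\max}}{J-1}$ scaling, so the quantity $\sum_l |\widetilde{B}_l(\lambda)|^2$ is exactly $\sum_{l}\beta^{(n)}(t-l)$ evaluated at $t = \tfrac{\lambda_{\max}}{J-1}\lambda + 1$, where $l$ ranges over $-\Delta,\dots,J+\Delta+1$.

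The first key step is therefore to invoke the \emph{partition of unity for B-splines}: for any degree $n\ge 0$ and any real $t$, $\sum_{k\in\mathbb{Z}}\beta^{(n)}(t-k) = 1$. This is standard and follows from the convolution definition of $\beta^{(n)}$ together with $\sum_k\beta^{(0)}(t-k)=1$ (the only subtlety being the half-weights at half-integers, which pair up correctly). The second step is to check that truncating the infinite sum over $k\in\mathbb{Z}$ to $l\in\{-\Delta,\dots,J+\Delta+1\}$ loses nothing on the relevant range of $t$: since $\beta^{(n)}$ is supported in $[-\Delta^{(n)},\Delta^{(n)}]$ with $\Delta^{(n)}=(n+1)/2$, and $t = \tfrac{\lambda_{\max}}{J-1}\lambda+1$ ranges over $[1, \tfrac{\lambda_{\max}}{J-1}\lambda_{\max}+1]$ — wait, more carefully, at $\lambda\in[0,\lambda_{\max}]$ the argument $\tfrac{\lambda_{\max}}{J-1}(\lambda-l+1)$ is nonzero only for those $l$ with $|l - 1 - \tfrac{J-1}{\lambda_{\max}}\lambda\cdot\tfrac{\lambda_{\max}}{J-1}|$... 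I will instead argue directly that for $t$ in the image interval, $\beta^{(n)}(t-l)=0$ whenever $l<-\Delta$ or $l>J+\Delta+1$, using $\Delta=\lfloor n/2\rfloor-1$ and $\Delta^{(n)}=(n+1)/2$; a short interval-arithmetic check (case $n$ even versus $n$ odd) shows the chosen range of $l$ covers every index whose shifted spline can be nonzero on $[0,\lambda_{\max}]$.

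The third step is purely bookkeeping: the definition~(\ref{eq:B}) folds the ``overhang'' terms $\widetilde{B}_i$ with $i\in\{-\Delta,\dots,0\}$ into $B_1$ and those with $i\in\{J+1,\dots,J+\Delta+1\}$ into $B_J$. Because these overhang kernels have disjoint supports from one another in the relevant sense — more precisely, because $|B_1(\lambda)|^2 = |\widetilde{B}_1(\lambda) + \sum_{i=-\Delta}^{0}\widetilde{B}_i(\lambda)|^2$ and on $[0,\lambda_{\max}]$ at most one of these summands is nonzero at any given $\lambda$ once $n\ge 2$ is assumed (this is where the hypothesis $n\ge2$ enters, guaranteeing the overhang splines do not overlap each other or $\widetilde B_1$ on the interior) — the cross terms vanish and $|B_1(\lambda)|^2 = \sum_{i=-\Delta}^{1}|\widetilde{B}_i(\lambda)|^2$, and similarly for $B_J$. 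Summing over $j$ then telescopes to $\sum_{l=-\Delta}^{J+\Delta+1}|\widetilde{B}_l(\lambda)|^2 = 1$ by Steps 1–2. Finally, the partition-of-unity identity~(\ref{eq:pouBsplines}) is exactly the tight-frame condition~(\ref{eq:tightFrame})–(\ref{eq:G2}) with $C=1$ after sampling $B_j(\lambda)$ at the eigenvalues via~(\ref{eq:kernelSamplig}), so the Parseval-frame conclusion follows from the discussion preceding the proposition.

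I expect the main obstacle to be Step 3: verifying that squaring the \emph{sums} in the definitions of $B_1$ and $B_J$ introduces no cross terms on $[0,\lambda_{\max}]$. This requires pinning down exactly where the overhang splines $\widetilde{B}_i$ ($i\le 0$ and $i\ge J+1$) are supported relative to the interval and to each other, and it is precisely here that the degree restriction $n\ge2$ and the floor in $\Delta=\lfloor n/2\rfloor-1$ must be used carefully; the even/odd parity of $n$ changes the support geometry slightly and should be handled as two cases.
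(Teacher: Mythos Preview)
Your overall strategy --- reduce $\sum_j |B_j(\lambda)|^2$ to a finite segment of $\sum_{k\in\mathbb{Z}}\beta^{(n)}(t-k)=1$ --- is exactly the paper's approach: its Appendix~I proof is the one-line chain
\[
\sum_{j=1}^{J}|B_j(\lambda)|^2 \;=\; \sum_{i=-\Delta}^{J+\Delta+1}|\widetilde{B}_i(\lambda)|^2 \;=\; \sum_{i}\beta^{(n)}\bigl(\cdot - i\bigr)\;=\;1,
\]
invoking the integer-shift partition of unity at the last step. Your Steps~1--2 are a more careful version of this and are fine.

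The gap is in Step~3. Your proposed mechanism for killing the cross terms --- that on $[0,\lambda_{\max}]$ at most one of $\widetilde{B}_{-\Delta},\dots,\widetilde{B}_0,\widetilde{B}_1$ is nonzero at any given point --- is false for every $n\ge 2$. Consecutive integer shifts of $\beta^{(n)}$ have support width $n+1\ge 3$ and therefore always overlap; concretely, at $\lambda=0$ one has $\widetilde{B}_1(0)^2=\beta^{(n)}(0)>0$ and $\widetilde{B}_0(0)^2=\beta^{(n)}(1)>0$, so the cross term $2\widetilde{B}_0(0)\widetilde{B}_1(0)$ is strictly positive. A quick check with $n=3$ gives $|B_1(0)|^2=(\sqrt{2/3}+\sqrt{1/6})^2>1$ under the literal reading of~(\ref{eq:B}), which already breaks~(\ref{eq:pouBsplines}). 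So the even/odd case split you anticipate will not rescue the argument.

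The paper's own proof does not address this either; it simply writes the first equality above as following from~(\ref{eq:B}). The reading that makes both the proposition and the proof correct --- and that matches Fig.~\ref{fig:ust}, where $B_1$ visibly satisfies $|B_1(0)|^2 = 1$ --- is that~(\ref{eq:B}) is meant at the level of \emph{squared} kernels, i.e.\ $|B_1(\lambda)|^2 = |\widetilde{B}_1(\lambda)|^2 + \sum_{i=-\Delta}^{0}|\widetilde{B}_i(\lambda)|^2$ (equivalently, $B_1$ is the square root of a sum of shifted $\beta^{(n)}$ values), and similarly for $B_J$. Under that interpretation there are no cross terms at all, and your Steps~1--2 (together with the support-count showing the truncated index range captures every nonzero shift on $[0,\lambda_{\max}]$) already finish the proof. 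I would recommend you flag this as a definitional point rather than attempt a disjoint-support argument that cannot hold.
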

\begin{proof}
See Appendix~I.
\end{proof}
Fig.~\ref{fig:ust} shows two realizations of B-spline based systems of spectral kernels. The systems entail wide, overlapping passband kernels. Moreover, the kernels are smooth, which enables their seamless approximation as low order polynomials.
  
\begin{figure}[]
\centering
\includegraphics[width=.48\textwidth]{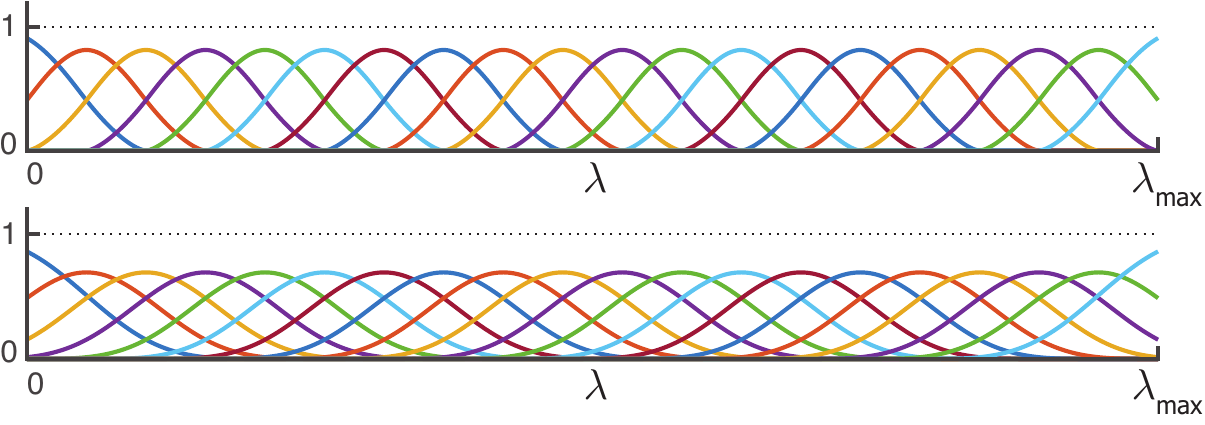}
\caption[]{B-spline based system of spectral kernels with $20$ spectral bands constructed based on B-splines of order 3 (top) and 7 (bottom). The dotted lines correspond to $G(\lambda)$ in (\ref{eq:G2}), i.e., property~(\ref{eq:pouBsplines}).}
\label{fig:ust}
\end{figure}

The system of kernels defined in Proposition~\ref{propo:BsplineFrame} are uniformly spread across the spectrum. Such a system is suitable to estimate the overall of distribution of energy of graph signals across the spectrum. However, given that real-world graph signals are generally lowpass, it can be efficient to leverage a larger number of kernels at the lower-end of the spectrum to enable a more fine-scale estimation of the low-frequency energy content. To this aim, the uniform system of kernels given in Proposition~\ref{propo:BsplineFrame} can be tailored to entail a multiresolution characteristic, wherein the lower-end kernels become more narrow-band passband up to a desired pivot point---$\lambda^{(\text{piv})} \in (0, \lambda_{\textrm{max}}), ~ \lambda^{(\text{piv})} \ll \lambda_{\textrm{max}}$---in the spectrum after which the kernels become more wide-band passband kernels. This can be done by defining a piecewise linear warping function as 
\begin{align}
P(\lambda) & = 
\begin{cases} 
m_1 \lambda, &~~~~~~~ 0 \leq \lambda \leq \lambda^{(\text{piv})}, \\ 
m_2(\lambda - \lambda^{(\text{piv})}) + y^{(\text{piv})}, & \lambda^{(\text{piv})} < \lambda \leq \lambda_{\text{max}},
\end{cases}
\end{align}
where $m_1 = y^{(\text{piv})}/\lambda^{(\text{piv})}$ and $m_2 = (1 - y^{(\text{piv})})/(\lambda_{\text{max}}-\lambda^{(\text{piv})})$ are line slopes, $y^{(\text{piv})} = N_{\textrm{lower}} / N_{\textrm{total}}$, and $N_{\textrm{lower}}$ and $N_{\textrm{total}}$ are the number of kernels that are to cover the lower-end of the spectrum and the entire spectrum, respectively. Smoothing $P(\lambda)$ eliminates the sharp transition between its piecewise linear components, yielding a well-behaved warping function. This ensures a smooth system of warped kernels, enabling practical polynomial approximations for signal decomposition, cf.~\ref{sec:decomposition_approx}. Fig.~\ref{fig:sosks57} shows a multiresolution system of B-spline based kernels obtained by warping the uniformly distributed system of B-spline based kernels in Proposition~\ref{propo:BsplineFrame}, with $N_{\textrm{total}} = 57$, using a smoothed version of $P(\lambda)$. 

\begin{figure*}[]
\centering
\includegraphics[width=.65\textwidth]{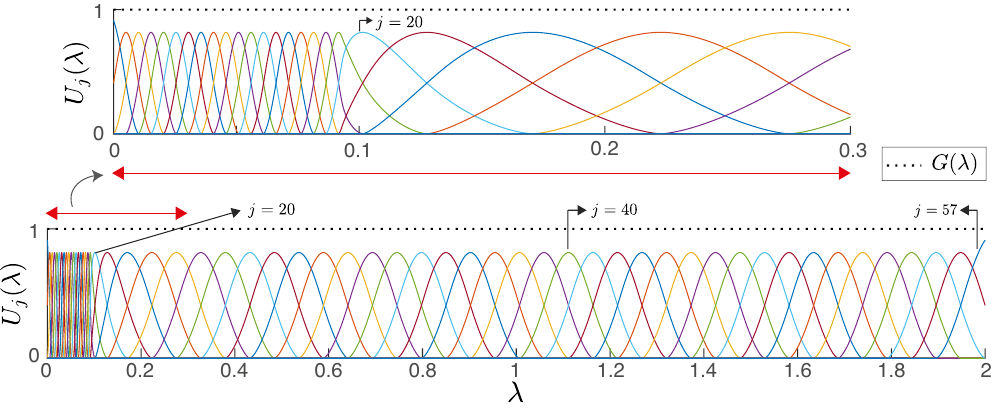}
\caption[]{SOSKS57. The dotted lines correspond to $G(\lambda)$ in (\ref{eq:G2}).} 
\label{fig:sosks57}
\end{figure*}

\subsection{Meyer-like Uniform System of Spectral Kernels}
\label{sec:uniformMeyer}
\begin{propo} 
\label{propo:umt} (uniform Meyer-type (UMT) system of spectral kernels) Using the auxiliary function of the Meyer wavelet, given by \citep{Meyer}  
\begin{equation}
\label{eq:nu}
\nu(x) = x^4(35-84x+70x^2-20x^3),
\end{equation} 
a set of $J \ge 2$ spectral kernels defined as 
\begin{subequations}
\label{eq:uniformFrame}
\begin{align}
U_{1}(\lambda) & = 
\begin{cases}
\label{eq:kernelScale1}
1 &   \quad \quad \quad \:  \forall \lambda  \in [0,a]  \\
\cos(\frac{\pi}{2} \nu (\frac{1}{\gamma-1}(\frac{\lambda}{a}-1))) & \quad \quad \quad  \: \forall \lambda \in ]a,\gamma a]  \\
0 &   \quad \quad \quad \: \emph{elsewhere}
\end{cases}
\\ U_{j}(\lambda) & = 
\begin{cases}
\label{eq:kernelScalej}
\sin(\frac{\pi}{2} \nu (\frac{1}{\gamma-1}(\frac{\lambda - (j-2)\Delta }{a}-1)))  &  \forall \lambda \in ]\lambda_{\textrm{I}},\lambda_{\textrm{II}}]    \\
\cos(\frac{\pi}{2} \nu (\frac{1}{\gamma-1}(\frac{\lambda - (j-1) \Delta}{a}-1))) &  \forall \lambda \in ]\lambda_{\textrm{II}},\lambda_{\textrm{II}} + \Delta]  \\
0 & \emph{elsewhere}
\end{cases}
\\ U_{J}(\lambda) &= 
\begin{cases}
\label{eq:kernelScaleJ}
\sin(\frac{\pi}{2} \nu (\frac{1}{\gamma-1}(\frac{\lambda - (J-2) \Delta }{a}-1)))  &  \forall \lambda \in ]\lambda_{\textrm{I}},\lambda_{\textrm{II}}]    \\
1 &  \forall \lambda \in ]\lambda_{\textrm{II}},\lambda_{\textrm{II}} + a]  \\
0 & \emph{elsewhere}
\end{cases}
\end{align}
\end{subequations}
can be constructed, where 
\begin{subequations}
\begin{align}
\label{eq:Delta}
\Delta & = \gamma a - a, \\ 
\label{eq:Lambda}
\lambda_{\textrm{I}}  & = \: \:a + (j-2) \Delta, \\
\lambda_{\textrm{II}}  & = \gamma a+ (j-2) \Delta, \\
\label{eq:a}
a &  = \frac{\lambda_{\textrm{max}}}{J\gamma -J -\gamma +3}. 
\end{align}
\end{subequations}
Fig.~\ref{fig:uniformFrameConstruction} illustrates the notations used. By setting $\gamma = 2.73$, the set of kernels defined in (\ref{eq:uniformFrame}) satisfies the uniformity constraint given in (\ref{eq:uniformityConstraintContinious}). The atoms of a dictionary constructed using this set of spectral kernels form a Parseval frame on $\ell_{2}(G)$.
\end{propo}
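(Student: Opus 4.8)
The plan is to reduce the Parseval-frame claim to a single pointwise identity and then verify that identity by a support-and-overlap bookkeeping argument. By (\ref{eq:frame})--(\ref{eq:tightFrame}), the dictionary of atoms built from $\{U_j(\lambda)\}_{j=1}^{J}$ is a Parseval frame on $\ell_2(G)$ as soon as $G(\lambda)=\sum_{j=1}^{J}|U_j(\lambda)|^2\equiv 1$ on $[0,\lambda_{\mathrm{max}}]$, which is precisely the uniformity constraint (\ref{eq:uniformityConstraintContinious}). So the whole proof consists in establishing this partition-of-unity identity.

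First I would pin down the supports. Using (\ref{eq:Delta})--(\ref{eq:a}) and writing $\Delta=(\gamma-1)a$, a direct computation gives $\operatorname{supp}U_1=[0,\gamma a]$, $\operatorname{supp}U_j=[\,a+(j-2)\Delta,\ \gamma a+(j-1)\Delta\,]$ for $2\le j\le J-1$, and $\operatorname{supp}U_J=[\,a+(J-2)\Delta,\ (\gamma+1)a+(J-2)\Delta\,]$. The right endpoint of $\operatorname{supp}U_J$ is $(\gamma+1)a+(J-2)(\gamma-1)a=(J\gamma-J-\gamma+3)\,a=\lambda_{\mathrm{max}}$ by the choice of $a$ in (\ref{eq:a}), so the supports cover $[0,\lambda_{\mathrm{max}}]$ without a gap. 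Each middle kernel has a rising ($\sin$) branch of width $\Delta$ followed by a falling ($\cos$) branch of width $\Delta$, consecutive middle kernels are translates of one another by exactly $\Delta$, and $U_1,U_J$ overlap their single neighbour on an interval of the same length $\Delta$. Hence $\operatorname{supp}U_j\cap\operatorname{supp}U_{j+1}$ is an interval $I_j$ of length $\Delta$, whereas $\operatorname{supp}U_j\cap\operatorname{supp}U_{j+2}$ is a single point. Consequently, off a finite set of points $\lambda$, at most two kernels are nonzero, and on $I_j$ the active branches are the falling branch of $U_j$ and the rising branch of $U_{j+1}$.

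The key step is the observation that on $I_j$ the two relevant arguments of $\nu$ coincide. From (\ref{eq:kernelScalej}), the falling branch of $U_j$ uses $\tfrac{1}{\gamma-1}\big(\tfrac{\lambda-(j-1)\Delta}{a}-1\big)$ and the rising branch of $U_{j+1}$ uses $\tfrac{1}{\gamma-1}\big(\tfrac{\lambda-((j+1)-2)\Delta}{a}-1\big)$; since $(j+1)-2=j-1$ these are the same function of $\lambda$, call it $x$, and one checks $x$ increases from $0$ to $1$ as $\lambda$ sweeps $I_j$ (so $\nu(x)\in[0,1]$ and both branches are nonnegative, as required for $U_j\in L_2(G)$). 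Hence on $I_j$ one gets $|U_j(\lambda)|^2+|U_{j+1}(\lambda)|^2=\cos^2(\tfrac{\pi}{2}\nu(x))+\sin^2(\tfrac{\pi}{2}\nu(x))=1$. On the complementary pieces exactly one kernel is active with value $1$: the plateau $[0,a]$ where $U_1\equiv1$, the top plateau of $U_J$ where $U_J\equiv1$, and the peak $\lambda=\lambda_{\mathrm{II}}$ of each middle kernel, where its rising branch reaches $\sin(\tfrac{\pi}{2}\nu(1))=1$ while the two neighbours vanish. Assembling these pieces over $[0,\lambda_{\mathrm{max}}]$ yields $G(\lambda)\equiv1$, i.e. (\ref{eq:uniformityConstraintContinious}), and the Parseval-frame property then follows from (\ref{eq:frame})--(\ref{eq:tightFrame}).

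Two remarks on where the effort actually lies. The argument uses no property of $\nu$ beyond the Pythagorean identity $\sin^2+\cos^2=1$; in particular the Parseval property does not single out $\gamma=2.73$ --- it holds for every admissible $\gamma>1$ and $J\ge2$, the tiling being structural because each kernel's rising and falling portions have equal width $\Delta$ while consecutive kernels are spaced by $\Delta$. (The specific quartic (\ref{eq:nu}), with $\nu(0)=0$, $\nu(1)=1$ and $\nu'(x)=140x^3(1-x)^3$, is what makes the kernels smooth enough for the polynomial approximation scheme of Section~\ref{sec:decomposition_approx}; $\gamma$ is simply fixed to a convenient value.) The only genuinely delicate part is the index bookkeeping for the two boundary kernels $U_1,U_J$ --- which carry the plateaus and each have a single neighbour --- together with checking that the affine reparametrisations in (\ref{eq:kernelScale1})--(\ref{eq:kernelScaleJ}) are aligned so that the falling argument of $U_j$ and the rising argument of $U_{j+1}$ are literally the same function on $I_j$; I expect this routine verification to be the main obstacle.
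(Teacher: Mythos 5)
Your verification of the Parseval-frame claim is correct and is essentially the paper's own argument: you derive $a$ from the requirement that the right end of $\operatorname{supp}U_J$ equal $\lambda_{\max}$, note that only consecutive kernels overlap, check that on each overlap interval the falling argument of $U_j$ and the rising argument of $U_{j+1}$ are the same affine function of $\lambda$ sweeping $[0,1]$, and conclude $G(\lambda)\equiv 1$ from $\cos^2+\sin^2=1$. You are also right that this part holds for every admissible $\gamma>1$.

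There is, however, a genuine gap: you have misidentified the uniformity constraint, and as a result you never prove the clause of the proposition that actually requires $\gamma=2.73$. Equation (\ref{eq:uniformityConstraintContinious}) is \emph{not} the partition of unity $\sum_j|U_j|^2\equiv 1$ (that is the separate tight-frame constraint (\ref{eq:tightFrameConstraint2})); it is the requirement that all kernels have equal integral, $\int_0^{\lambda_{\max}}U_j(\lambda)\,d\lambda = C$ for $j=1,\dots,J$. This is exactly the claim you dismiss with ``$\gamma$ is simply fixed to a convenient value.'' The middle kernels $U_2,\dots,U_{J-1}$ are translates of one another by $\Delta$, so they share a common integral $C_2$; $U_1$ and $U_J$ are mirror images of one another, so they share a common integral $C_1$; but $C_1\neq C_2$ in general, because $U_1$ and $U_J$ carry a plateau of width $a$ and only one transition region instead of two. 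Imposing $C_1=C_2$ reduces, after cancelling the shared $\cos$-transition using the translation identity between consecutive kernels, to the single scalar equation $a=\int_a^{\gamma a}\sin\bigl(\tfrac{\pi}{2}\nu(\tfrac{1}{\gamma-1}(\tfrac{\lambda}{a}-1))\bigr)\,d\lambda$ in $\gamma$ alone (independent of $J$ and $\lambda_{\max}$), whose numerical solution is $\gamma=2.73$. Without this step the statement that the kernels satisfy (\ref{eq:uniformityConstraintContinious}) at $\gamma=2.73$ remains unproven.
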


\begin{proof}
See Appendix~II.
\end{proof}
    
\begin{figure}[]
\centering
\includegraphics[width=.48\textwidth]{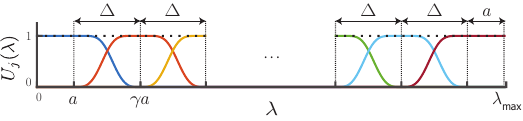} 
\caption[]{Construction of UMT system of spectral kernels.}
\label{fig:uniformFrameConstruction}
\end{figure}

\begin{figure}[]
\centering
\includegraphics[width=.48\textwidth]{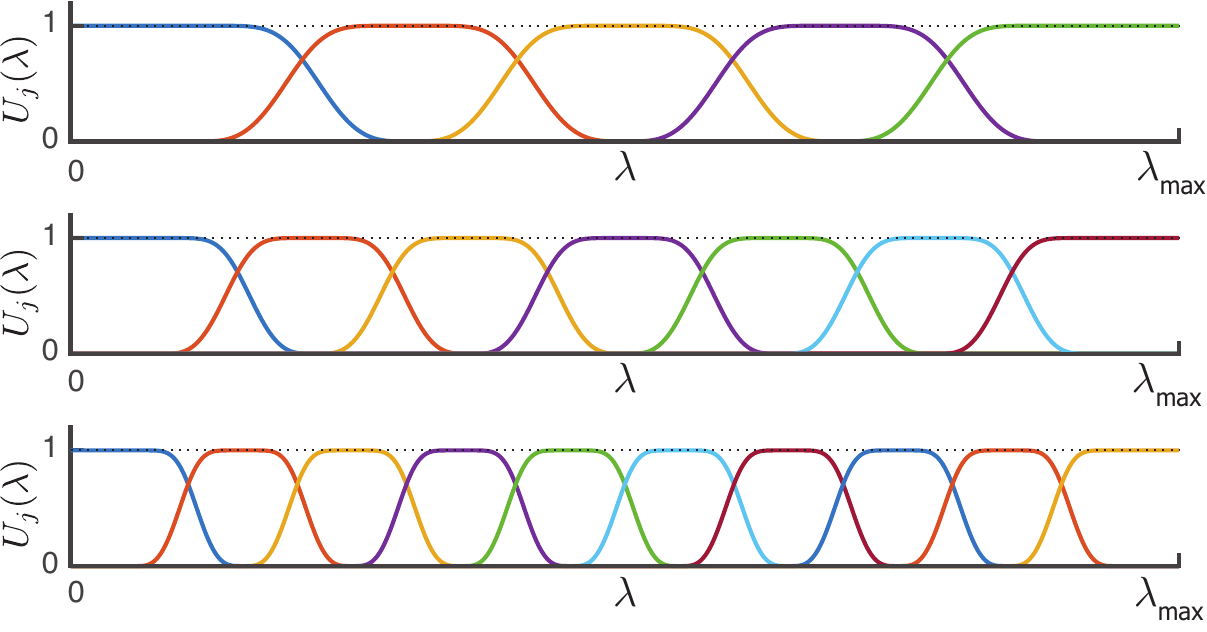}
\caption[]{UMT system of spectral kernels with $J=5$ (top), $J=7$ (middle) and $J=10$ (bottom) spectral scales. The dotted lines correspond to $G(\lambda)$ in (\ref{eq:G2}).} 
\label{fig:umt}
\end{figure}

Figs.~\ref{fig:umt}(a) and (b) show realizations of the resulting UMT system of spectral kernels for a fixed $\lambda_{\textrm{max}}$ and two different $J$. The UMT system of spectral kernels have a narrow passband characteristic with the support of each kernel being a rather strict subset of the spectrum, with minimal overlap of adjacent kernels. 

\section{Ensemble Energy Spectral Density}
\label{sec:ensemble-energy}
The ensemble energy spectral density can be either computed using the graph Fourier transform or approximated through decomposition of the signals using polynomial approximation. In the former approach the ensemble energy is determined at the resolution of eigenvalues whereas in the latter approach it is determined at the resolution of a given number of subbands. The direct computation approach has two shortcomings. Firstly, it requires explicit computation of the graph spectrum and the associated eigenvectors; i.e., a full eigendecompositon of the graph Laplacian matrix,  which is computationally cumbersome for large graphs and infeasible for extensively large graphs. Secondly, it typically results in a non-smooth description of the ensemble energy. These shortcomings are resolved by using the polynomial approximation scheme. 

\subsection{Direct Computation: Using the Graph Fourier Transform}
\subsubsection*{Definition (ensemble energy spectral density)}
For a given graph $G$, with spectrum $\Lambda(G)$, and graph signal set $F = \{\vec{f}_s\}_{s=1}^{N_s}$, the ensemble energy spectral density of $F$ is obtained as
\begin{equation}
\label{eq:ensembleE}
e_{F}[l] = \frac{1}{N_{s}} \sum_{s=1}^{N_{s}} \left | \hat{\widetilde{f}_{s}}[l] \right |^{2}, \quad l = 1,\ldots,N_g,
\end{equation}
where $\widetilde{\vec{f}}_{s}$ denotes the de-meaned and normalized version of $\vec{f}_{s}$ obtained as 
\begin{equation}
\label{eq:g_s}
\widetilde{\vec{f}}_{s} = \frac{\vec{f}_s - \sum_{r=1}^{1+m_{\lambda_{1}}} \langle \vec{f}_s, \vec\chi_{r} \rangle \vec\chi_{r}}{|| \vec{f}_{s} - \sum_{r=1}^{1+m_{\lambda_{1}}} \langle \vec{f}_s, \vec\chi_{r} \rangle \vec\chi_{r}||_{2}}, \quad s=1,\cdots, N_{s}.
\end{equation}
The ensemble energy spectral density has the following properties: (i) $ \{e_{F}[r] = 0\}_{r=1}^{1+m_{\lambda_{1}}}$, and (ii) $ \sum_{l} e_{F}[l] = 1$.

\subsubsection{Approximation: Using Decomposition through Polynomial Approximation}
\label{sec:approx}
The ensemble energy spectral density can be approximated through a multi subband decomposition scheme. Here we use the B-spline based system of spectral kernels, cf.~\ref{sec:bsplineBasedSystem}. The benefit in using a B-spline basis is in the smoothness characteristic of such kernels. Smooth overlapping kernels are advantageous it that i) they enable obtaining a smooth estimation of the ensemble energy spectral density and ii) they can be approximated as low order polynomials. We then decompose the graph signals using the designed system of kernels with a large number of subbands by exploiting the polynomial approximation scheme in decomposition. With such a decomposition, we approximate the ensemble spectral content of the signal set at the resolution of subbands.

Using a system of $N_a$ B-spline based spectral kernels, $\{B_{i}(\lambda)\}_{i=1}^{N_a}$, the ensemble spectral energy of $F$ can be approximated at $N_{a}$ overlapping bands across the spectrum as 
\begin{align}
\label{eq:ensembleE_approx}
a_{F}[i] & = \frac{1}{N_{s}} \sum_{s=1}^{N_{s}} \sum_{n=1}^{N_{g}}  \left | \langle \widetilde{\vec{f}}_{s}, {\vec\psi}_{B_{i},n} \rangle \right |^{2}, \quad i = 1,\ldots,N_a,
\end{align}
where $\widetilde{\vec{f}}_{s}$ is as given in (\ref{eq:g_s}). Let $\vec{b}_{j} \in \ell_{2}(G)$ denote the discrete version of $B_{j}(\lambda)$, i.e., 
\begin{equation}
\label{eq:kernelSampligBeta}
b_{j}[l] = B_{j}(\lambda_{l}), \quad l=1,\ldots, N_{g}.
\end{equation} 

We have $\sum_{i} a_{F}[i] = 1$ since
\begin{align}
\sum_{i} a_{F}[i] 
& \stackrel{(\ref{eq:coeffs3})}{=} \frac{1}{N_{s}} \sum_{i=1}^{N_{a}}\sum_{s=1}^{N_{s}} \sum_{n=1}^{N_{g}}  \left | \sum_{l=1}^{N_g} b_{i}[l] \hat{\widetilde{f}_{s}}[l] \chi_{l}[n] \right |^{2}
\label{eq:bBeta}
\\&\: = \frac{1}{N_{s}} \sum_{s=1}^{N_{s}} \sum_{n=1}^{N_{g}} \Bigg | \sum_{l=1}^{N_g} \underbrace{\sum_{i=1}^{N_{a}} {b_{i}}^{2}[l]}_{\stackrel{(\ref{eq:pouBsplines})}{=}1} \hat{\widetilde{\vec{f}}}_{s}[l] \chi_{l}[n] \Bigg |^{2}
\\&\:= \frac{1}{N_{s}} \sum_{s=1}^{N_{s}} \sum_{n=1}^{N_g}  \left | \sum_{l=1}^{N_{g}}    \hat{\widetilde{f}_{s}}[l]  \chi_{l}[n] \right |^{2}
\\ 
& \stackrel{(\ref{eq:inverseFourier})}{=}  \frac{1}{N_{s}} \sum_{s=1}^{N_{s}} \sum_{n=1}^{N_g} |\widetilde{f}_{s}[n]|^{2}
\\&\: = \frac{1}{N_{s}} \sum_{s=1}^{N_{s}} ||\widetilde{\vec{f}}_{s}||_{2}^{2} 
\\& \stackrel{(\ref{eq:g_s})}{=}1.
\end{align}

If desired, an explicit approximation of the ensemble energy spectral density of $F$, denoted $e_{F}^{(a)}[l]$, can also be determined. First, a continuous ensemble spectral energy representation, denoted $E_{F}^{(a)}(\lambda)$, is obtained through interpolating the set of points 
\begin{equation}
\left \{ (0,0) \cup \left \{ \left ( \frac{\lambda_{\textrm{max}}}{C}  \sum_{k=1}^{i} ||B_{k}(\lambda)||_{2}^{2}\: , \: a_{F}[i] \right ) \right \}_{i=1}^{N_{a}} \right \},
\end{equation}
where $C = \sum_{k=1}^{N_{a}} ||B_{k}(\lambda)||_{2}^{2}$. Then, $e_{F}^{(a)}[l]$ is obtained through sampling $E_{F}^{(a)}(\lambda)$ at $\Lambda(G)$ as 
\begin{equation}
\label{eq:ensembleE_approx}
e_{F}^{(a)}[l] = E_{F}^{(a)} (\lambda_{l}), \quad l=1,\ldots,N_{g}.
\end{equation}

\section{Signal-Adapted System of Spectral Kernels}
\label{sec:signal-adapted}
The construction of a signal-adapted system of spectral kernels is motivated by two observations: (i) the eigenvalues of the graph Laplacian that define the graph's spectrum are irregularly spaced, and depend in a complex way on the graph topology; (ii) the distribution of graph signals' energy is generally non-uniform across the spectrum. Based on these observations, the idea is to construct an \lq adapted\rq~frame, such that the energy-wise significance of the eigenvalues is taken into account, rather than only adapting based on the distribution of the eigenvalues as proposed in \citep{Shuman2015ieee}. In this way, also the topological information of the graph is implicitly incorporated in the design, since the energy content is given in the graph spectral domain that is in turn defined by the eigenvalues. 

For the design of a signal-adapted system of spectral kernels with $J$ subbands, denoted $\{{S}_{j}(\lambda)\}_{j=1}^{J}$, we start off from a prototype system of spectral kernels $\{U_{j}(\lambda)\}_{j=1}^{J}$ that satisfies the following two properties:
\begin{itemize} 
\item (Uniformity constraint)
\begin{equation}
\label{eq:uniformityConstraintContinious}
\exists \: C \in \mathbb{R}^{+}, \quad \int_{0}^{\lambda_{\textrm{max}}} {U}_{j}(\lambda) d\lambda = C, \quad j=1,\ldots,J.
\end{equation}
\item (Tight Parseval frame constraint)
\begin{equation}
\label{eq:tightFrameConstraint2}
\sum_{j=1}^{J} | U_{j}(\lambda)|^{2} = 1, \quad \forall \lambda \in [0,\lambda_{\textrm{max}}].
\end{equation}
\end{itemize}
There is no unique system of kernels that satisfies (\ref{eq:uniformityConstraintContinious}) and (\ref{eq:tightFrameConstraint2}). In this work we use the Meyer-like uniform system of spectral kernels (cf.~\ref{sec:uniformMeyer}) that satisfies the two properties. We then use a suitable spectral-reorganisation transformation to warp the system of kernels to reorganise them along the spectrum to satisfy the desired signal adaptivity as entailed in the spectral-tuning transformation. Here we focus on one specific transform that aims to equalise the amount of ensemble energy that is captured by each kernel.  

If the ensemble spectral density function is available, $T_{F}(\lambda)$ is obtained through monotonic cubic interpolation \citep{monotoniccubicInterpolation} of the pair of points 
\begin{equation}
\label{eq:equiEnergyTransExact}
\left \{ (0,0) 
\cup  \left \{ \left ( \lambda_{l} \: , \:  \alpha_{l} \right ) \right \}_{l=2}^{N_{g}-1}
\cup (\lambda_{\textrm{max}},\lambda_{\textrm{max}})
\\
 \right \},
\end{equation} 
where $Y_l$ is given as
\begin{equation} 
\alpha_{l} = \displaystyle{\frac{ \lambda_{\textrm{max}}} {m_{\lambda_{l}} } \sum_{r=i_{\lambda_{l}}}^{i_{\lambda_{l}}+m_{\lambda_{l}}} \sum_{k=1}^{r} e_{F}[k]}.
\end{equation} 
If the ensemble energy spectral density is approximated using a system of $N_{a}$ B-spline based spectral kernels, cf.~(\ref{sec:approx}), $T_{F}(\lambda)$ can instead be obtained through monotonic cubic interpolation of the set of points 

\begin{equation}
\label{eq:equiEnergyTransApprox}
\left \{ (0,0) \cup \left \{ \left (\omega_{l}, \alpha_{l} \right ) \right \}_{i=1}^{N_{a}-1} \cup (\lambda_\textrm{max},\lambda_\textrm{max}) \right \},
\end{equation}
where $C  = \sum_{k=1}^{N_{a}} ||B_{k}(\lambda)||_{2}^{2}$, and pairs $(\omega_{l}, \alpha_{l})$ are given as
\begin{align}
\omega_{l} & = \frac{\lambda_{\textrm{max}}}{C}  \sum_{k=1}^{i} ||B_{k}(\lambda)||_{2}^{2}
\\
\alpha_{l} & = \lambda_{\textrm{max}} \sum_{k=1}^{i} a_{F}[k].
\end{align}
By incorporating a desired $T_{F}(\lambda)$ in $\{U_{j}(\lambda)\}_{j=1}^{J}$, a warped version of the prototype design is obtained as
\begin{align}
\label{eq:warpedFrame}
 S_{j}(\lambda) = U_{j}(T_{F}(\lambda)), \quad j=1, \ldots, J.
\end{align}
We refer to $\{S_{j}(\lambda)\}_{j=1}^{J}$ as a signal-adapted system of spectral kernels. The atoms of a dictionary constructed using $\{S_{j}(\lambda)\}_{j=1}^{J}$ form a Parseval frame on $\ell_{2}(G)$ since 
\begin{align}
\sum_{j=1}^{J} \vert S_{j}(\lambda)\vert ^{2} 
& \stackrel{(\ref{eq:warpedFrame})}{=}  \sum_{j=1}^{J} \vert U_{j}(\underbrace{T_{F}(\lambda)}_{:=\lambda^{'}})\vert ^{2}, \quad \forall \lambda \in  [0,\lambda_{\text{max}}] 
\nonumber
\\
& \: \: = \sum_{j=1}^{J} \vert U_{j}(\lambda^{'})\vert ^{2}, \quad \forall \lambda^{'} \in  [0,\lambda_{\text{max}}]
\nonumber
\\
& \: \: = 1
\nonumber
\end{align}
where the last equality follows from Proposition~\ref{propo:umt}. For direct decomposition as in (\ref{eq:coeffs3}), a discrete representation $\{{\boldsymbol{s}}_j\}_{j=1}^{J}$ can be obtained through sampling $S_{j}(\lambda)$ at $\Lambda(G)$. With this design, each of the $J$ spectral kernel $\{{\boldsymbol{s}}_j\}_{j=1}^{J}$ capture an equal amount of \textit{ensemble} energy. That is, if the ensemble energy spectral density is used we have   
\begin{equation}
\label{eq:equiEnergyConstraint}
\sum_{l=1}^{N_{g}} {\boldsymbol{s}}_{j}[l] \boldsymbol{e}_{F}[l] \stackrel{(\ref{eq:ensembleE})}{=} \frac{1 }{J}, \quad j=1,\ldots,J,   
\end{equation}
and if the approximation scheme is used we have 
\begin{equation}
\label{eq:equiEnergyConstraintApprox}
\sum_{l=1}^{N_{g}} {\boldsymbol{s}}_{j}[l] \boldsymbol{e}_{F}^{(a)}[l] \stackrel{(\ref{eq:ensembleE_approx})}{=} \frac{1 }{J}, \quad j=1,\ldots,J.  
\end{equation}

\section{Example Designs and Applications of Signal-Adapted Decomposition of Graph Signals}
We present constructions of signal-adapted systems of spectral kernels for signal sets realized on the Minnesota road graph, as well as three different brain graphs: template-based cerebellum gray matter graph \citep{Behjat2015}, individualised cerebral cortex graphs~\citep{Milloz2024biorxiv}, and individualised white matter graphs~\citep{Abramian2021}.

\subsection{The Minnesota Road Graph}
\label{sec:minnesota}
The edges of the Minnesota Road Graph represent major roads and its vertices their intersection points, which often correspond to towns or cities, see Fig.~\ref{fig:minnesotaGraph}(a). Fig.~\ref{fig:minnesotaGraph}(b) shows the graph's normalized Laplacian spectrum presented as the distribution of the eigenvalues.  

Given the absence of real data for this commonly used benchmark graph, we consider a model for simulating random graph signals of varying smoothness. For a given graph with adjacency matrix $A$, a general model for realizing graph signals of density $\eta \in ]0,1]$ and smoothness $n \in \mathbb{Z}^{+}$ ca be defined as   
\begin{equation}
\label{eq:signalModel}
\vec{x}_{\eta,n} = {A}^{n} \vec{p}_{\eta},
\end{equation}
where $\vec{p}_{\eta} \in \ell_{2}(G)$ denotes a random realization of a spike signal as $\{\vec{p}_{\eta}[i] \in \{0,1\}\}_{i=1,\ldots,N_g}$ such that $\sum_i \vec{p}_{\eta}[i] =  \eta N_g$. Application of the $n$-th power of $A$ to $\vec{p}_{\eta}$ leads to a signal that i) respects the intrinsic structure of the graph and ii) has a desired smoothness determined by $n$, a higher $n$ leading to a smoother graph signal. 

\begin{figure}[]
\centering
\includegraphics[width=0.48\textwidth]{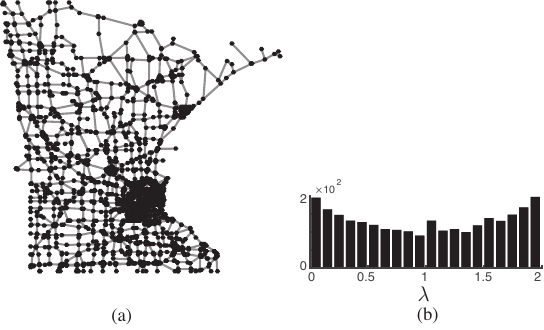}
\caption[]{(a) Minnesota road graph. (b) Histograms of the eigenvalues $\Lambda_{\mathcal{L}}(G)$ of the Minnesota road graph. Each bar indicates the number of eigenvalues that lie in the corresponding spectral range.}
\label{fig:minnesotaGraph}
\end{figure}

\begin{figure}[]
\centering
\includegraphics[width=0.48\textwidth]{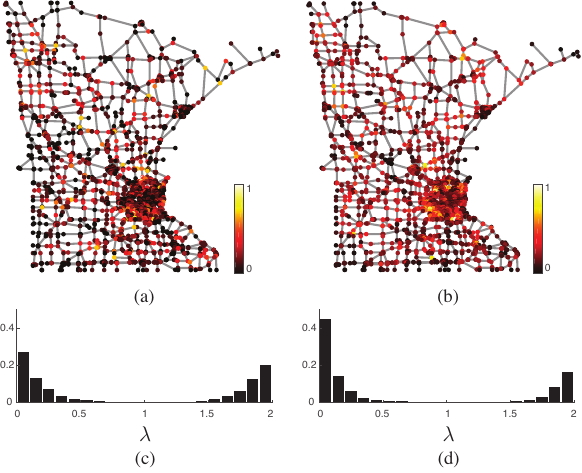}
\caption[]{Sample signal realizations on the Minnesota road graph, (a) $\vec{x}_{0.2,2}$ and (b) $\vec{x}_{0.5,4}$. The plots are normalized as $\vec{x}_{\eta,n}/||\vec{x}_{\eta,n}||_{\infty}$ (c)-(d) Distribution of the ensemble energy spectral density $\boldsymbol{e}_{F_{1}}$ and $\boldsymbol{e}_{F_{2}}$, respectively. Each bar indicates the sum of ensemble energies of the eigenvalues lying in the corresponding spectral range.}
\label{fig:minnesotaSignals}
\end{figure}

\begin{figure*}[]
\centering
\includegraphics[width=0.6\textwidth]{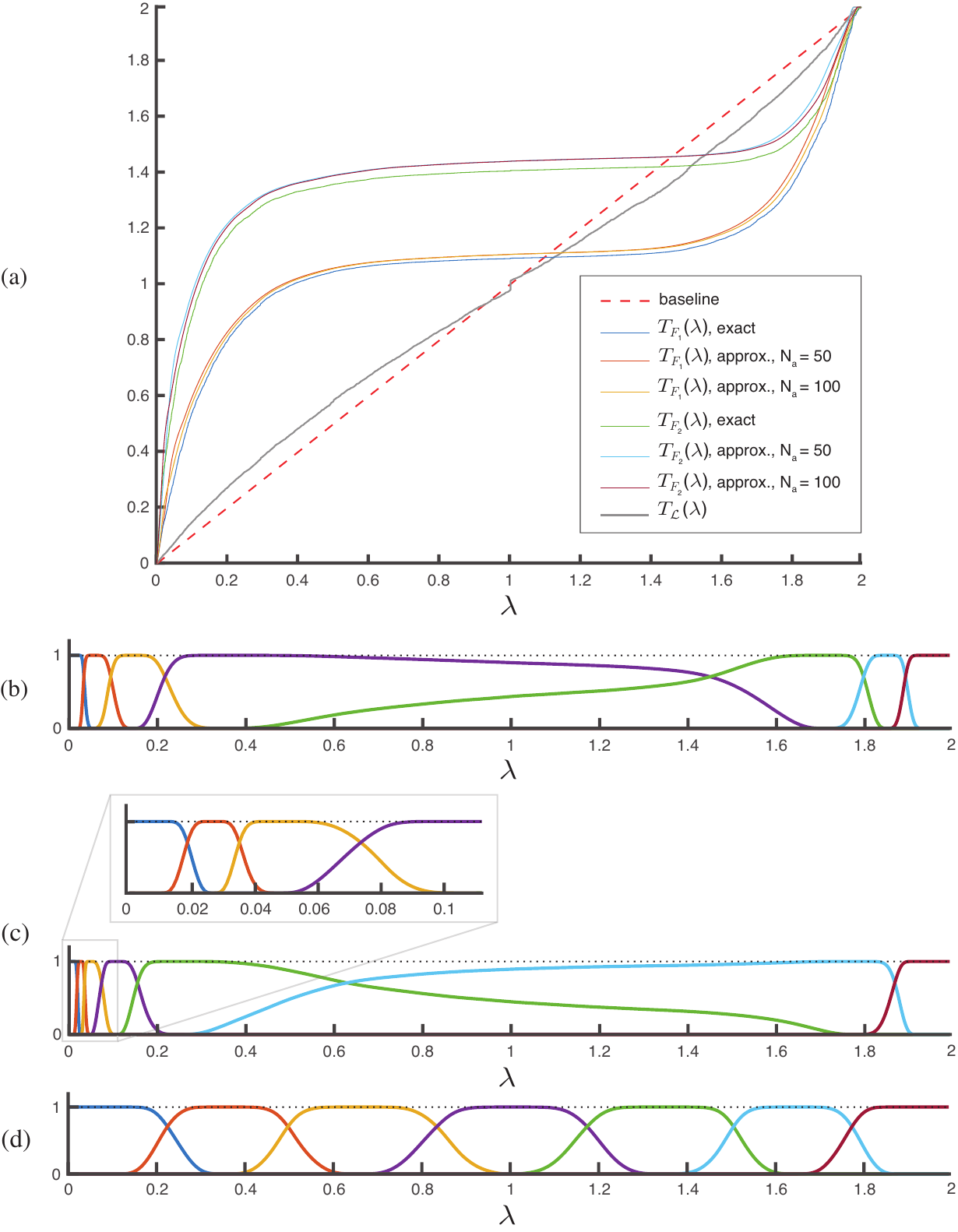}
\caption[]{(a) Constructed energy-equalizing transformation functions, $T_{F_{1}}(\lambda)$ and $T_{F_{2}}(\lambda)$ using the exact and approximation schemes. $N_{a}$ denotes the number of spectral kernels used for the approximation, cf. (\ref{eq:ensembleE_approx}). (b)-(c) Signal-adapted system of spectral kernels constructed by warping the UMT system of spectral kernels ($J=7$) using $T_{F_{1}}(\lambda)$ (approx, $N_{a}=100$) and $T_{F_{2}}(\lambda)$ (approx., $N_{a}=100$), respectively. (d) Spectrum-adapted system of spectral kernels constructed by warping the UMT system of spectral kernels ($J=6$) using $T_{\mathcal{L}}(\lambda)$. In (b)-(d), the dotted lines corresponds to $G(\lambda)$ in (\ref{eq:G2}).}
\label{fig:minnesotaResults}
\end{figure*}

Two sets of graph signals were constructed as 
\begin{align*}
F_{1} =  \left \{ \left \{ \vec{x}_{\eta,2}^{[i]} \right \}_{\eta =0.2,0.5}\right\}_{i=1,\ldots,10},
\\
F_{2}  =  \left \{ \left \{ \vec{x}_{\eta,4}^{[i]} \right \}_{\eta =0.2,0.5}\right\}_{i=1,\ldots,10},
\end{align*}
where index $i$ denotes random realizations of $\boldsymbol{p}_{\eta}$ in (\ref{eq:signalModel}), resulting in 20 signals in each set. Figs.~\ref{fig:minnesotaSignals}(a) and (b) show a realizations of a signal from $F_{1}$ and $F_{2}$, respectively. 

Fig.~\ref{fig:minnesotaResults}(a) shows the energy-equalizing transformation functions associated to $F_{1}$ and $F_{2}$. The transformations constructed based on $\boldsymbol{a}_{F_{\cdot}}$, cf. (\ref{eq:equiEnergyTransApprox}) closely matches that constructed based on $\boldsymbol{e}_{F_{\cdot}}$, cf. (\ref{eq:equiEnergyTransExact}). The former transformation has the benefit of being smooth, and indeed, that it was computed without the explicit need to diagonalize $L$. By incorporating the transformations in the UMT system of spectral kernels, signal-adapted systems of spectral kernels are obtained, see Figs.~\ref{fig:minnesotaResults}(b)-(c). 

\begin{figure*}[]
   \centering
   \includegraphics[width=0.7\textwidth]{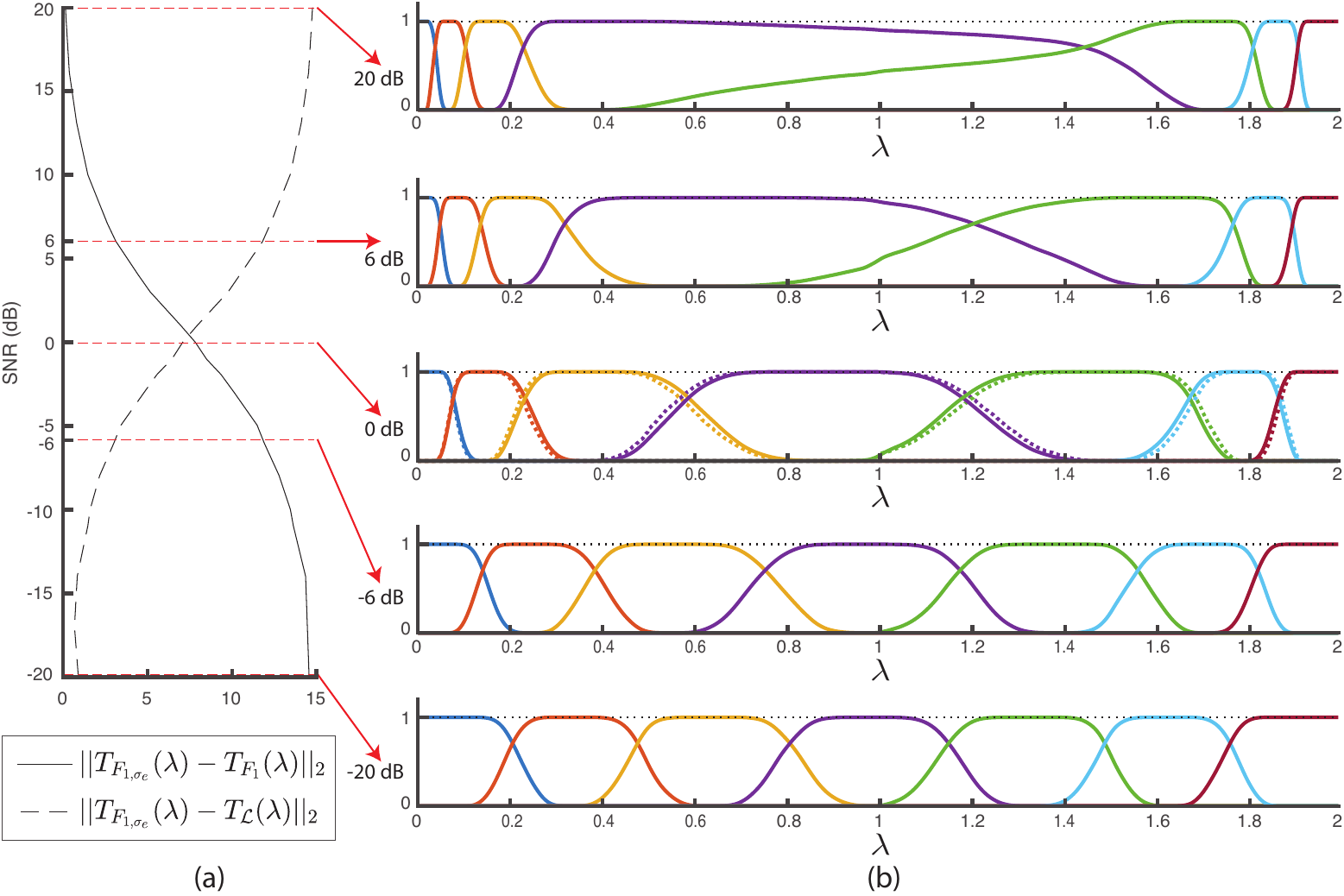} 
   \caption{(a) Deviation of energy-equalizing transformation functions of noise added signal sets $T_{F_{1,\sigma_{e}}}(\lambda)$ relative to $T_{\mathcal{L}}(\lambda)$ and $T_{F_{1}}(\lambda)$ (cf. Fig.~\ref{fig:minnesotaResults}(a)) as a function of the signal sets' SNRs. (b) Signal-adapted system of spectral kernels constructed by warping the UMT system of spectral kernels ($J=7$) using $T_{F_{1,\sigma_{e}}}(\lambda)$ of noise-added signal sets at five different SNRs. At 0 dB, the resulting system of kernels are overlaid on the system of kernels obtained by warping the UMT system of spectral kernels using the transformation function  $(T_{F_{1}}(\lambda) + T_{\mathcal{L}}(\lambda) )/2$, shown in dashed lines.}
   \label{fig:robustnessToNoise}
\end{figure*}

A comparison of Figs.~\ref{fig:minnesotaResults}(b) and (c) and Figs.~\ref{fig:minnesotaSignals}(c) and (d) highlights the energy-wise optimality of the proposed signal-adapted frame construction; i.e., more filters are allocated to spectral ranges  that have higher ensemble energy. The support of the filters in the two sets vary relative to the difference in the distribution of the ensemble energy of the two signal sets, with more filters allocated to the lower end of the spectrum for the $F_{2}$ frame than for the $F_{1}$ frame, and vice versa at the upper end of the spectrum. For comparison, a spectrum-adapted system of kernels is shown in Figs.~\ref{fig:minnesotaResults}(d). The spectrum-adapted system of kernels is obtained by warping the UMT prototype system of kernels with a spectrum-equalizing transformation function $T_{\mathcal{L}}(\lambda)$ which equalizes the distribution of the eigenvalues \citep{Shuman2015ieee}. As the distribution of the eigenvalues of the Minnesota Road graph minimally deviate from a uniform distribution, so does the spectrum-adapted system of kernels relative to the UMT prototype, compare Figs. \ref{fig:umt} and \ref{fig:minnesotaResults}(d). On the contrary, the signal-adapted design optimizes the construction of the kernels such that the energy-wise significance of the eigenvalues is taken into account, rather than only considering the distribution of the eigenvalues as in the spectrum-adapted frame. Such adaptation results in a system of spectral kernels that largely deviate from the UMT prototype.   

\subsubsection{Robustness to Noise}
It is interesting to study the robustness of the design to possible additive noise. Let $F_{1,\sigma_{e}}$ denote the noise added version of signal set $F_{1}$ computed as 

\begin{equation}
F_{1,\sigma_{e}} =  \left \{ \vec{y}_{i} = \vec{x}_{i} + \vec{e}_{i}~|~\vec{x}_{i} \in F_{1} \right\}_{i=1,\ldots,20}, 
\end{equation}
where $\{\vec{e}_{i}\}_{i=1}^{20}$ denote random realizations of additive white Gaussian noise of standard deviation $\sigma_{e}$. We construct signal sets $F_{1,\sigma_{e}}$ of varying SNR $= \sigma_{x}^{2} / \sigma_{e}^{2}$, where $\sigma_{x}$ denotes the standard deviation of each signal $\vec{x}_{i} \in F_{1}$. Let $T_{F_{1,\sigma_{e}}}(\lambda)$ denote the energy-equalizing transformation function associated to $F_{1,\sigma_{e}}$. Fig.~\ref{fig:robustnessToNoise}(a) shows  mean-square error metrics $|| T_{F_{1,\sigma_{e}}}(\lambda) - T_{F_{1}}(\lambda) ||_{2}$ and $|| T_{F_{1,\sigma_{e}}}(\lambda) - T_{\mathcal{L}}(\lambda) ||_{2}$ across signal sets $F_{1,\sigma_{e}}$ of varying SNR, where $T_{\mathcal{L}}(\lambda)$ and $T_{F_{1}}(\lambda)$ are the transformation functions shown in Fig.~\ref{fig:minnesotaResults}(a), $T_{F_{1}}(\lambda)$ being the approximated version using $N_{a}=100$. The estimated energy-equalizing transformation functions $T_{F_{1,\sigma_{e}}}(\lambda)$ become more similar to $T_{F_{1}}(\lambda)$ as the SNR increases. At low SNRs, $T_{F_{1,\sigma_{e}}}(\lambda)$ become more similar to $T_{\mathcal{L}}(\lambda)$. The signal-adapted system of spectral kernels using noise-added signal sets of five different SNRs are shown in Fig.~\ref{fig:robustnessToNoise}(b). At the two extremes, i.e., +20 dB and -20dB, the system of kernels become almost identical to the system of kernels shown in Figs.~\ref{fig:minnesotaResults}(b) and (d), respectively. At 0dB, the signal-adapted system of kernels at each subband can be seen as the average of the corresponding kernels in the associated subbands in Figs.~\ref{fig:minnesotaResults}(b) and (d). Equivalently, this can be seen as constructing a system of kernels through warping the the UMT prototype system of kernels with a warping function defined as the average of the spectrum-equalizing and energy-equalizing transformation functions, i.e., $(T_{F_{1}}(\lambda) + T_{\mathcal{L}}(\lambda) )/2$, see Fig.~\ref{fig:robustnessToNoise}(b) at 0 dB.        

\subsection{Multiscale Characterization of Brain Cortical Maps}
GSP is highly effective in brain neuroimaging, with the objective to study and characterize brain anatomy, function, pathology, and their interactions. A common approach constructs a brain graph from anatomical features such as cortical morphology (from structural MRI) or white matter fiber architecture (from diffusion MRI). The brain's gray matter forms a convoluted structure interleaved with white matter and cerebrospinal fluid, while white matter consists of axonal fiber pathways. These structures enable detailed spatial modeling of individual brain anatomy as a graph.

Functional (e.g., functional MRI) and pathological (e.g., PET) neuroimaging data can be treated as graph signals, representing functions on a brain graph. Functional MRI (fMRI), a standard technique for studying brain function, captures blood-oxygen-level-dependent (BOLD) signals, which indicate changes in blood flow to active regions. Historically analyzed in gray matter \citep{Logothetis2004}, recent studies have demonstrated its presence also in white matter \citep{Li2019, Abramian2021, Schilling2023, Huang2023, Zu2024}. In both tissues, the BOLD signal exhibits spatial patterns that are not well suited for analysis in a classical Euclidean framework, where filters and wavelets are typically isotropic and shift-invariant. Signal decompositions using Euclidean-based designs are thus inadequate for detecting non-isotropic BOLD signal in gray matter as well as elongated, low-amplitude BOLD signal that aligns with fiber bundles and crosses gray matter boundaries. The same limitations hold for PET imaging data, specifically, PET-based manifestation of pathological proteins such as tau and amyloid-beta in Alzheimer's disease, which manifest notably heterogeneity in their spatial profiles that is linked to underlying anatomical boundaries.    

Anatomically adapted graph wavelets \citep{Behjat2015} and low-pass filters tailored for white matter \citep{Abramian2021, Behjat2025} have been proposed to address the need for designs that adapt to the domain of studied signals. However, challenges remain in designing flexible frames that suitably partition the spectrum for analysis of fMRI~\citep{Behjat2020isbi, Behjat2021embc, Ferritto2023}. This motivates the need for a frame design that adapts to the spectral characteristics of fMRI and PET graph signals. Here we examine brain graph signal decompositions optimized for the energy distribution of fMRI and PET data, treated as graph signals on three types of anatomical brain graphs. In the first two applications (Sections~\ref{sec:cerebellum} and \ref{sec:asymmetry}), we use the design methodology introduced in Section~\ref{sec:signal-adapted} while in the third example (Section~\ref{sec:cerebellum}) we show how a signal-adapted decomposition can be implemented more flexibly with user defined criteria.  

\subsubsection{Cerebellum graph and fMRI graph signals}
\label{sec:cerebellum}
We construct a graph encoding the 3D topology of cerebellar gray matter~\citep{Behjat2015} using an atlas-based template~\citep{suitAtlas}. Graph vertices represent gray matter voxels, while edges are defined by $3\times3\times3$ voxel neighborhood adjacency (Fig.~\ref{fig:cerebellumGraph}). fMRI data were collected from 26 healthy subjects performing an event-related visual stimulation task~\citep{Kelly}. Each subject underwent structural MRI and functional scans, which were co-registered and resampled to ensure voxel-wise alignment. Functional voxels corresponding to cerebellar gray matter were then extracted and treated as cerebellar graph signals. For each subject, a signal set $\{F_{k}\}_{k=1}^{26}$ was created by randomly selecting 20 functional signals. A combined signal set, $F = F_{1} \cup F_{2} \cup \cdots \cup F_{26}$, was also constructed across all subjects. 

\begin{figure}[b]
\centering
\includegraphics[width=.48\textwidth]{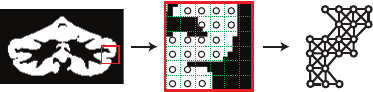} 
\caption[]{Illustration of the cerebellum graph.}
\label{fig:cerebellumGraph}
\end{figure}

\begin{figure}[h]
\centering
\includegraphics[width=.48\textwidth]{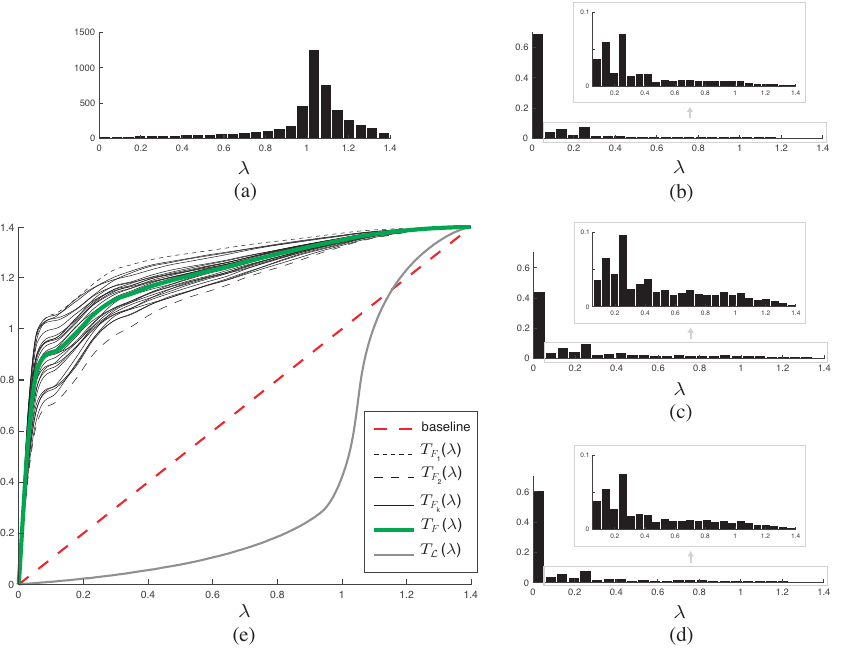}
\caption[]{(a) Histogram of the eigenvalues $\Lambda_{\mathcal{L}}(G)$ of the cerebellum graph. 
(b)-(d) Distribution of the ensemble energy spectral density of $F_{1}$, $F_{2}$ and $F$. 
(c) Energy-equalizing and spectrum-equalizing transformation functions. The black curves correspond to the energy-equalizing transformation for each subject's signal set. The upper and lower extreme transformations represented with dashed curves are associated to signal sets $F_{1}$ and $F_{2}$, respectively.}
\label{fig:cerebellum-p1}
\end{figure}

\begin{figure}[]
\centering
\includegraphics[width=.48\textwidth]{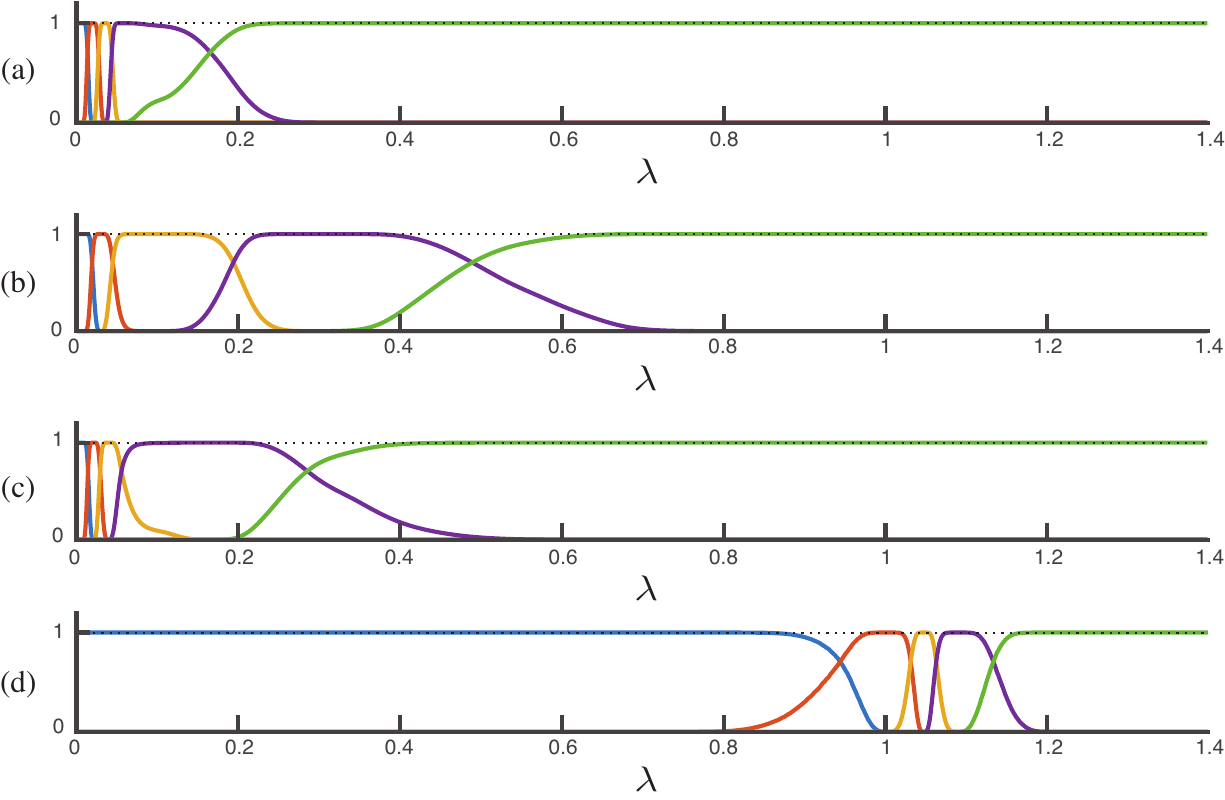}
\caption[]{
(a)-(c) Signal-adapted system of spectral kernels adapted to the ensemble spectral content of cerebellum graph signal sets $F_{1}$, $F_{2}$ and $F$, respectively. 
(d) Spectrum-adapted system of spectral kernels.}
\label{fig:cerebellum-p2}
\end{figure} 
  
Fig.~\ref{fig:cerebellum-p1}(a) shows the distribution of the eigenvalues of the cerebellum graph. The distribution of the ensemble energy spectral density of signals sets $F_{1}$, $F_{2}$ and $F$ are shown in Figs.~\ref{fig:cerebellum-p1}(b), (c) and (d), respectively. The distribution of eigenvalues is substantially different from that of the ensemble energy spectral densities; most eigenvalues are located at the upper end of the spectrum, whereas the ensemble energy is largely concentrated at the lower end of the spectrum. The ensemble energy spectral densities also vary across the signal sets. Signal set $F_{1}$ has more low energy spectral content than $F_{2}$ (compare the height of the first bins of the histograms in Figs.~\ref{fig:cerebellum-p1}(b) and (c)), whereas $F_{2}$ shows greater spectral content at higher harmonics. $F_{1}$ and $F_{2}$ represent the two extremes across the subjects. The distribution of the ensemble energy content of $F$ falls in between that of $F_{1}$ and $F_{2}$, see Figs.~\ref{fig:cerebellum-p1}(d). This is better observed by comparing the energy-equalizing transformation functions, see Fig.~\ref{fig:cerebellum-p1}(e). The transformations associated to $\{F_{k}\}_{k=3}^{26}$ span the space in between $T_{F_{1}}(\lambda)$ and $T_{F_{2}}(\lambda)$, and $T_{F}(\lambda)$ falls almost in the mid range. 

\begin{figure*}[h]
\centering
\includegraphics[width=.65\textwidth]{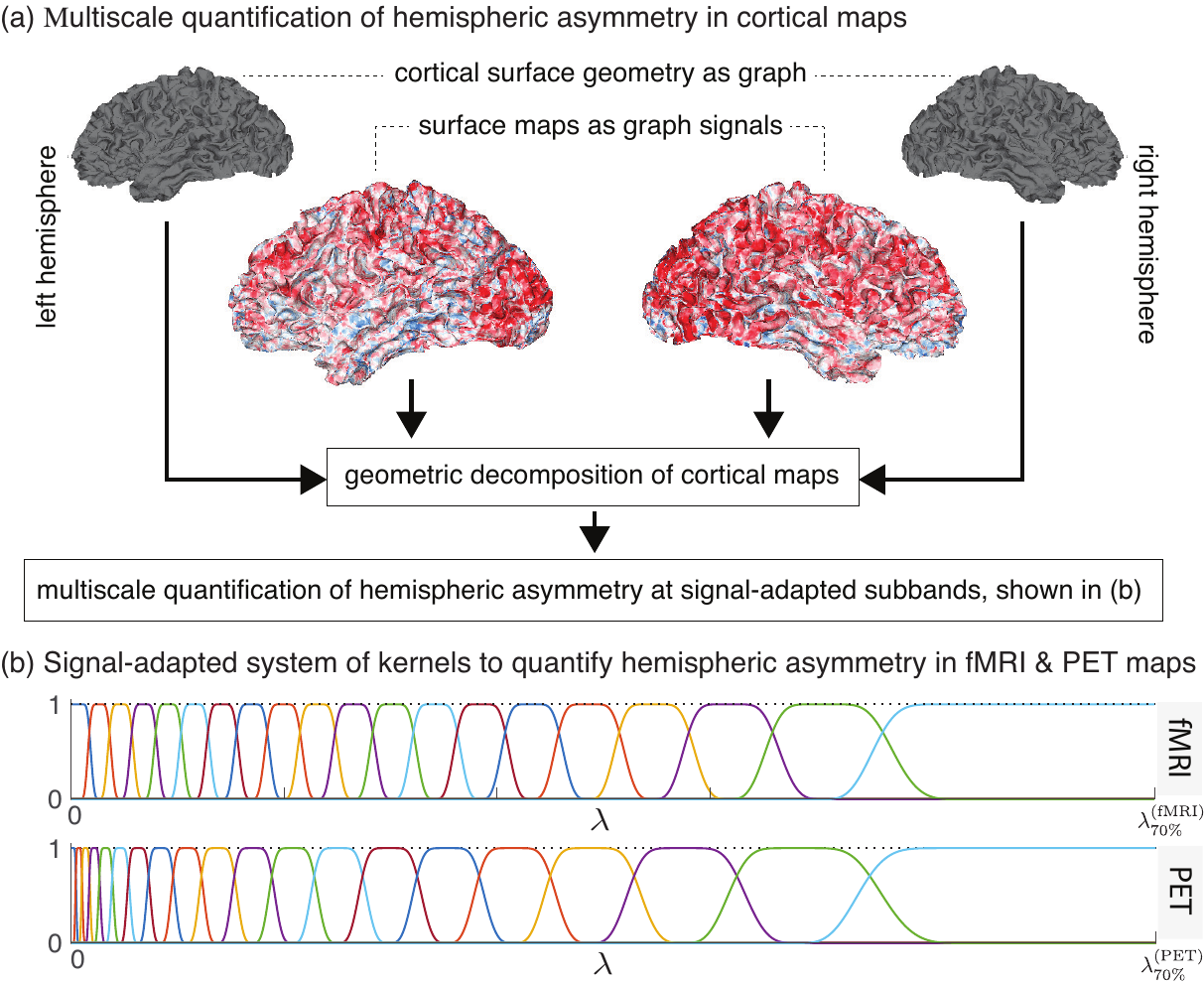}
\caption[]{Signal-adapted, multiscale quantification of hemispheric asymmetry in brain cortical maps. (a) Schematic overview of the methodology. (b) Signal-adapted frames covering the lower-end of the spectrum up to the spectral value that on average 70\% of the total ensemble spectral energy falls for fMRI (top) and tau-PET (bottom) cortical maps. Figure content based on \citep{Milloz2024biorxiv}.}
\label{fig:asymmetry-p1}
\end{figure*}

The stark contrast between eigenvalue and ensemble energy distributions results in a major discrepancy between? $T_{\mathcal{L}}(\lambda)$ and the energy-equalizing transformations. Fig.~\ref{fig:cerebellum-p2} illustrates the resulting signal-adapted and spectrum-adapted systems of spectral kernels. The spectrum-adapted frame's kernels are concentrated at the higher end of the spectrum, where many eigenvalues lie, while signal-adapted frame kernels are localized at the lower end. This suggests that the signal-adapting scheme optimally configures filters based on ensemble energy content. The proposed signal-adapted frame's narrowband configuration at the lower spectrum end closely aligns with prior cerebellar data analysis methods \citep{Behjat2015}, which were empirically tuned using a Meyer-like graph wavelet frame \citep{Leonardi2013}.

\subsubsection{Cortical surface graph and fMRI/PET graph signals}
\label{sec:asymmetry}

Hemispheric asymmetry is a fundamental feature of brain organization with implications for function, structure, and disease. Traditional methods for measuring hemispheric asymmetry rely on aggregate brain morphology metrics, such as surface area and gray matter volume, which capture only single spatial scales. For this type of analyses, the cortex is divided into atlas-defined regions, and laterality is determined by subtracting corresponding values in the left and right hemispheres. Cortical maps, however, exhibit both long- and short-range spatial frequency characteristics. Furthermore, the cerebral cortex is a continuous, highly convoluted structure that cannot be seamlessly partitioned into discrete regions at the individual level using non-invasive imaging.

Increasing evidence suggests that brain eigenmodes offer a powerful, generalized framework for studying the cortex's multiscale structural~\citep{Cao2024, Maghsadhagh2021,  Mansour2025, Wachinger2015} and functional organization~\citep{Behjat2025, HuangBolton2018, Mansour2024, Miri2024, Olsen2024, Tarun2020}. A harmonic framework based on cortical surface eigenmodes can comprehensively capture spatial patterns and quantify lateralisations. Here, we apply the signal-adapted frame design methodology to define spatial subbands optimized for capturing lateralized spatial content in cortical fMRI and PET maps~\citep{Milloz2024biorxiv} (Fig.~\ref{fig:asymmetry-p1}(a)).

We study fMRI maps---from healthy young adults~\citep{VanEssen2013}---that represent each region's functional connectivity, characterizing functional correlations/anti-correlations across the cortex. The PET maps---from cognitively unimpaired (CU) individuals and patients with Alzheimer's disease (AD) dementia~\citep{Palmqvist2020}---quantify tau pathology, a key marker of AD dementia. We focus on the lower-end of the spectrum, capturing up to 70\% of total signal energy in the studied maps, which corresponds to studying approximately the first 2000 eigenmodes when treating fMRI maps as graph signals and the first 500 when tau-PET maps are studied. This difference reflects the greater presence of larger spatial frequency content in fMRI connectivity maps compared to tau-PET maps. Fig.\ref{fig:asymmetry-p1}(b) shows the resulting signal-adapted system of kernels, revealing subtle differences in subband configurations between the two datasets.

\begin{figure*}[]
\centering
\includegraphics[width=.65\textwidth]{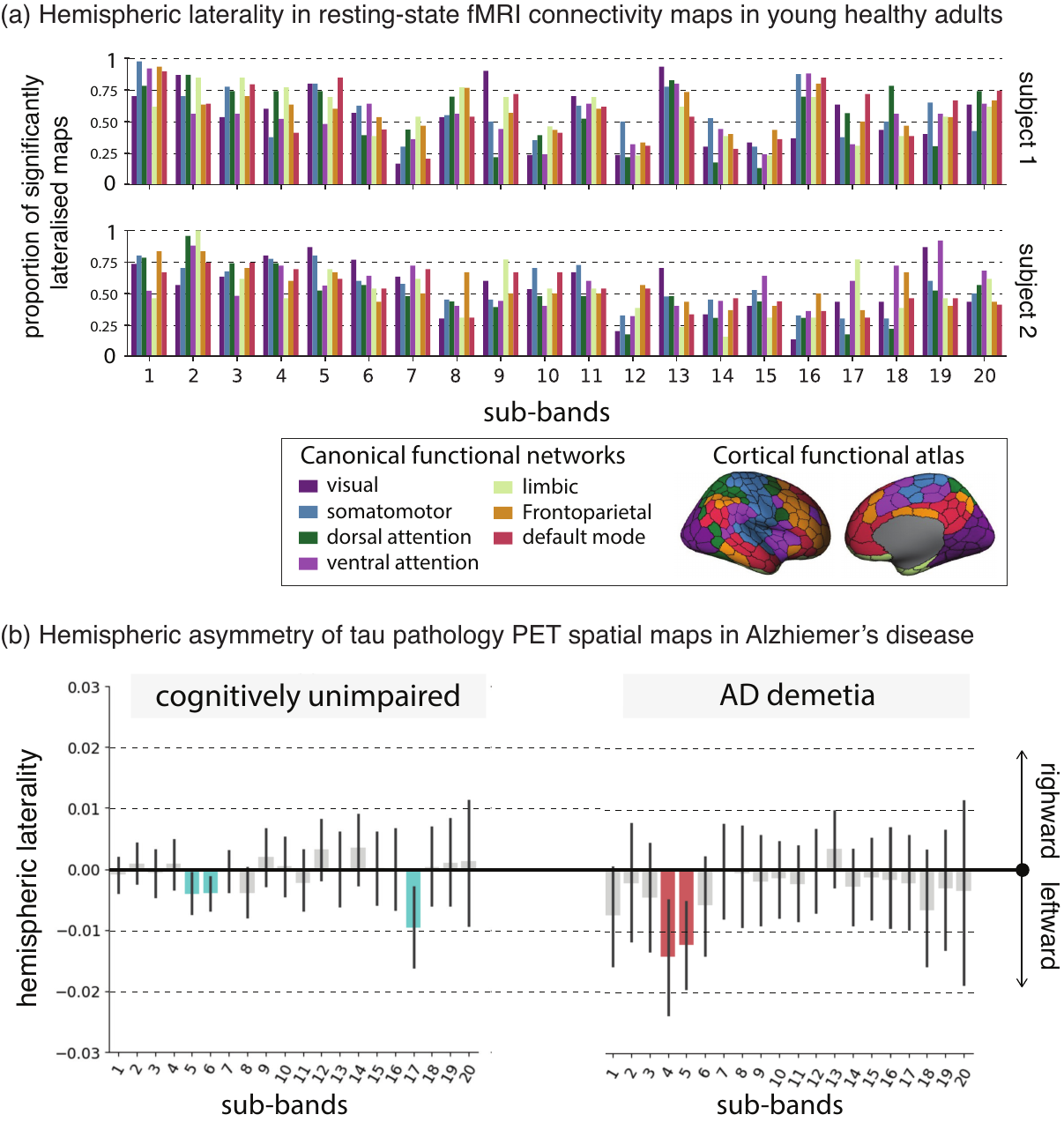}
\caption[]{Signal-adapted quantification of hemispheric asymmetry in brain cortical maps. (a) Proportion of fMRI seed connectivity maps that manifest significant laterality in the different spectral bands. Results shown for two representative subjects. (d) Hemispheric asymmetry in tau-PET maps in cognitively unimpaired patients and in patients with AD dementia. Figure content based on \citep{Milloz2024biorxiv}.}
\label{fig:asymmetry-p2}
\end{figure*}

For the fMRI data, cortical regions in different canonical functional networks have connectivity patterns that manifest different degrees of hemispheric asymmetry in the different subbands (Fig.~\ref{fig:asymmetry-p2}(a)). The results also reveal subject-specific idiosyncrasies in asymmetry across the different subbands and canonical networks. The greatest asymmetry is observed in the frontal, posterior parietal, and lateral temporal cortices. 

For the tau-PET data, aggregates of the pathological tau protein manifest subtle asymmetries at varying spatial scales, which notably differs between CU individuals and patients with AD dementia (Fig.~\ref{fig:asymmetry-p2}(b)). In both groups, significant leftward asymmetry is observed in several subbands (highlighed subbands), with the laterality being substantially larger in the patients with AD dementia in the geometric spatial scale that is represented by the fourth and fifth signal-adapted subbands. 
 
\subsubsection{White matter graph and fMRI graph signals}
To demonstrate the flexibility of signal-adapted graph signal decompositions, in this final application we employ a decomposition without the explicit use of the ``warping procedure'' described under Section~\ref{sec:signal-adapted}, i.e., Eq.~(\ref{eq:warpedFrame}). However, it still utilizes ensemble spectral energy estimates to guide the construction of spectral kernels tailored to the spectral characteristics of the signals at hand.
  
We construct individualized graphs encoding white matter fiber architecture using diffusion MRI data~\citep{Abramian2021}, as shown in Fig.\ref{fig:whitematter}(a). For each subject (95 total) and hemisphere, a graph is built where white matter voxels serve as vertices. Edges between adjacent voxels are weighted based on the coherence of their diffusion orientation distribution functions (ODFs)---higher weights indicate better alignment with the connecting edge. We refer the interested reader to~\citep{Abramian2021} for further design details.

\begin{figure*}[]
\centering
\includegraphics[width=.65\textwidth]{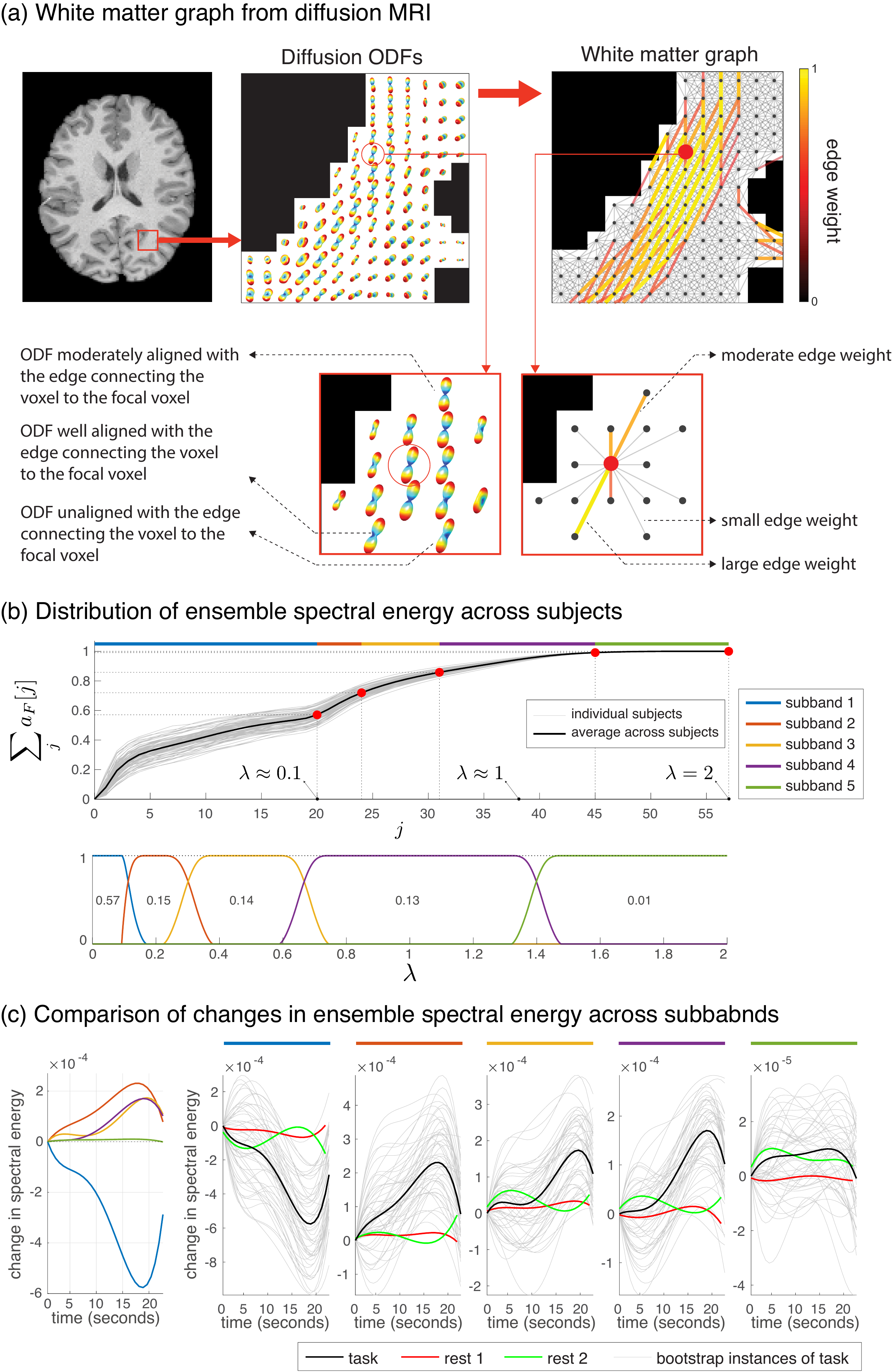}
\caption[]{Signal-adapted decomposition of fMRI data in white matter. (a) A schematic illustration of the white matter graph of a representative subject. (b) Ensemble spectral energy estimates using the system of kernels shown in Fig.~\ref{fig:sosks57} across 95 subjects. A five-band system of spectral kernels is defined based on these estimates. The first kernel covers the lower 5\% of the spectrum, the last spans the region capturing 1\% of total signal energy, and the remainder of the spectrum is divided into three bands, each capturing roughly equal amount of total ensemble signal energy; values shown in the kernels are fraction of total ensemble captured per band. (c) Figure content based on \citep{Abramian2021, Behjat2021isbi}.}
\label{fig:whitematter}
\end{figure*}

Fig.\ref{fig:whitematter}(b) illustrates the spectral energy distribution of white matter fMRI graph signals across subjects, estimated using the multi-resolution spectral kernel system (Fig.\ref{fig:sosks57}) via polynomial approximation (cf. Section~\ref{sec:approx}). Over 50\% of total spectral energy is concentrated in eigenvectors with eigenvalues in $[$0, 0.1$]$, while less than 1\% is in $[$1.4, 2$]$. Overall, more than 90\% of ensemble signal energy is captured by eigenvectors in the lower half of the spectrum. Given this distribution of energies, and in order to reduce the dimensionality of the spectral representation, we studied variations in the spectral energy of white matter fMRI maps using a coarser set of five spectral kernels as shown at the bottom of Fig.\ref{fig:whitematter}(b). This system of kernels, which form a Parseval frame, was defined by merging the narrowband kernels of the system of kernels (i.e. Fig.\ref{fig:sosks57}) used to estimate the energy content: the first kernel covers the lower 5\% of the spectrum (first 20 kernels of Fig.\ref{fig:sosks57}, as in prior studies~\citep{Behjat2020isbi}), the last spans the upper-end of the spectrum the eigenmode of which capture 1\% of total signal energy, and the remainder of the spectrum is divided into three bands, each capturing roughly an equal amount of the total spectral energy.  

Fig.~\ref{fig:whitematter}(c) compares changes in ensemble spectral energy across the five defined bands during an fMRI task. SE in band 1 decreases over time, while bands 2-4 show an overall increase, and band 5 remains largely unchanged. Notably, spectral content shifts significantly 10 seconds post-stimulus, potentially reflecting delayed hemodynamic response peaks in white matter, observed in multiple WM fiber bundles at a similar delay~\citep{Li2019}.

\section{Conclusion \& Outlook}
We presented a scheme for designing signal-adapted frames on graphs, leveraging the ensemble energy spectral density of a given signal set. The design, based solely on stationary signal information, allows flexibility in capturing non-stationary features through adjustable bandpass characteristics. While formulated on the graph Laplacian spectrum, the design can be extended to the spectrum of other graph shift operators, e.g. the graph adjacency spectrum to enable signal-adapted decomposition on directed graphs~\citep{Sevi2023, Chan2024, Stankovic2025}, or even to that of high-order networks~\citep{Schaub2021, Santoro2023, DalCol2024}. 
The proposed methods may also be used to design efficient graph filters for use in graph convolutional layers~\citep{Defferrard2016, Dong2020, Liu2024}. 

\section*{Acknowledgements}
This work draws in part on material previously published in \citep{Abramian2021, Behjat2021isbi, Behjat2016, Milloz2024biorxiv}.

\section*{Appendix 1 - Proof of Proposition~\ref{propo:BsplineFrame}}

The sum of squared magnitudes of B-spline based spectral kernels $\{B_{j}(\lambda)\}_{j=1}^{J}$ forms a partition of unity since
\begin{align}
\sum_{j=1}^{J} |B_{j}(\lambda)|^{2} 
& \:\: \stackrel{(\ref{eq:B})}{=} \sum_{i=\Delta}^{J+\Delta+1} |\widetilde{B}_{i}(\lambda)|^{2} 
\nonumber
\\
& \:\: \stackrel{(\ref{eq:B_l})}{=} \sum_{i=\Delta}^{J+\Delta+1} \beta^{(n)}\left (\frac{\lambda_{\text{max}}}{J-1} (\lambda - i+1) \right )
\nonumber
\\
& \stackrel{i-1 \rightarrow k}{=} \sum_{k=\Delta-1}^{J+\Delta} \beta^{(n)}\left (\frac{\lambda_{\text{max}}}{J-1} (\lambda - k) \right )
\nonumber
\\
& \quad = 1
\nonumber.
\end{align}
where in the last equality we use the property that integer shifted splines form a partition of unity. 

\section*{Appendix 2 - Proof of Proposition~\ref{propo:umt}}

In order to ensure that the spectral kernels cover the full spectrum, $a$ must be chosen such that 
\begin{equation*}
\lambda_{\text{max}}   \stackrel{(\ref{eq:kernelScaleJ})}{=} \lambda_{\text{II}}+a 
\stackrel{(j=J)}{=} 
\gamma a+(J-2)\Delta + a, 
\end{equation*}
which using (\ref{eq:Delta}) leads to $ a = \frac{\lambda_{\text{max}}}{J\gamma -J -\gamma +3}.$

To prove that the UMT system of spectral kernels form a tight frame, (\ref{eq:tightFrame}) needs to be fulfilled. Since, for all j, the supports of $U_{j-1}(\lambda)$ and $U_{j+1}(\lambda)$ are disjoint, $G(\lambda)$ can be determined as 
\begin{align}
\label{eq:pouUMT}
G(\lambda) & = \sum_{j=1}^{J} |U_{j}(\lambda)|^{2}  \nonumber  \\
\nonumber 
& \: \: \stackrel{(\ref{eq:uniformFrame})}{=} 
\begin{cases}
|U_{1}(\lambda)|^{2} \: \stackrel{(\ref{eq:kernelScale1})}{=} 1 & \forall \lambda \in [0,a]
\\  |U_{1}(\lambda)|^{2} + |U_{2}(\lambda)|^{2} & \forall \lambda \in ]a,\gamma a]
\\  |U_{2}(\lambda)|^{2} + |U_{3}(\lambda)|^{2} & \forall \lambda \in ]\gamma a,\gamma a + \Delta]
\\ \vdots & \vdots
\\ |U_{J}(\lambda)|^{2} \: \stackrel{(\ref{eq:kernelScaleJ})}{=} 1 & \forall \lambda \in ]\lambda_{\text{max}}-a,\lambda_{\text{max}}]
\end{cases} \\
& \stackrel{(\ref{eq:kernelScalej})}{=} 
\begin{cases}
1 & \forall \lambda \in [0,a]
\\ \cos^{2}(x_{\text{I}}) + \sin^{2}(x_{\text{I}}) & \forall \lambda \in ]a,\gamma a]
\\ \cos^{2}(x_{\text{II}}) + \sin^{2}(x_{\text{II}}) & \forall \lambda \in ]\gamma a,\gamma a + \Delta]
\\ \vdots & \vdots
\\ 1 & \forall \lambda \in ]\lambda_{\text{max}}-a,\lambda_{\text{max}}]
\end{cases} \nonumber \\
& \: \: = 1  \quad \forall \lambda \in [0,\lambda_{\text{max}}]
\end{align} 
where $x_{\text{I}} = \frac{\pi}{2} \nu (\frac{1}{\gamma-1}(\frac{\lambda}{a}-1)) $ and $x_{\text{II}} = \frac{\pi}{2} \nu (\frac{1}{\gamma-1}(\frac{\lambda - \Delta}{a}-1))$. 

For any given $\gamma$, the constructed set of spectral kernels form a tight frame. However, in order for the frame to satisfy the uniformity constraint given in (\ref{eq:uniformityConstraintContinious}), the appropriate $\gamma$ needs to be determined. From (\ref{eq:kernelScalej}), we have $\forall j \in  \{2, \ldots ,J-2\}$   
\begin{align}
\label{eq:A1}
 U_{j}(\lambda) & = U_{j+1}(\lambda + \Delta) \quad   \:  \forall \lambda \in ]\lambda_{\text{I}},\lambda_{\text{II}} + \Delta].
\end{align}
By considering an inverse linear mapping of the spectral support where $U_{1}(\lambda) \ne 0$, i.e. $[0, \gamma a]$, to the spectral support where $U_{J}(\lambda) \ne 0$, i.e. $[\lambda_{\text{max}}-\gamma a,\lambda_{\text{max}}]$, we have
\begin{equation}
\label{eq:A2}
U_{1}(\lambda)  = U_{J}(- \lambda + 2a + J \Delta)  \quad   \:  \forall \lambda \in [0,\gamma a].
\end{equation}
Thus, from (\ref{eq:A1}) and (\ref{eq:A2}) we have     
\begin{subequations}
\begin{align}
\int_{0}^{\lambda_{\text{max}}} {U_{j}}(\lambda) d\lambda  & =  C_2,  \quad j=2,\ldots,J-1  \\
\int_{0}^{\lambda_{\text{max}}} U_{1}(\lambda) d\lambda  & = \int_{0}^{\lambda_{\text{max}}} {U_{J}}(\lambda) d\lambda  = C_1,  
\end{align}
\end{subequations}
respectively, where $C_1,C_2 \in \mathbb{R}^{+}$. Thus, in order to satisfy (\ref{eq:uniformityConstraintContinious}), $\gamma$ should be chosen such that 
\begin{align}
C_1 & = C_2 \nonumber
\\\int_{0}^{\lambda_{\text{max}}} U_{1}(\lambda) d\lambda  & = \int_{0}^{\lambda_{\text{max}}} {U_{2}}(\lambda) d\lambda  \nonumber
\\
a + \int_{a}^{\gamma a} U_{1}(\lambda) d\lambda  & =  \int_{a}^{\gamma a} \sin(\frac{\pi}{2} \nu(\frac{1}{\gamma -1} (\frac{\lambda}{a} -1))) d\lambda  \nonumber
\\ & + \int_{\gamma a}^{\gamma a+\Delta} {U_{2}}(\lambda) d\lambda \nonumber
\\ \label{eq:lOfGamma} a & \stackrel{(\ref{eq:A1})}{=}  \int_{a}^{\gamma a} \sin(\frac{\pi}{2} \nu(\frac{1}{\gamma -1} (\frac{\lambda}{a} -1))) d\lambda.
\end{align} 
The optimal $\gamma$ that satisfies (\ref{eq:lOfGamma}) was obtained numerically by defining 
\begin{equation}
\label{eq:closedForm}
Q(\gamma ) = \int_{a}^{\gamma a} \sin(\frac{\pi}{2} \nu(\frac{1}{\gamma -1} (\frac{\lambda}{a} -1))) d\lambda - a, 
\end{equation} 
and discretizing $Q(\gamma )$ within the range $(a,\gamma a]$, with a sampling factor of $1 \times 10^{-4}$. Testing for $\gamma \ge 1$, with a step size of $1 \times 10^{-2}$, the optimal value, which is independent of $\lambda_{\text{max}}$ and $J$, was found to be $\gamma  = 2.73$.

\section*{References}
\bibliographystyle{bxv_abbrvnat}
\bibliography{main}
\end{document}